\pgfplotsset{
    compat=1.4,
    small,
    legend style={
        at={(0.99,0.99)},
        anchor=north east,
        font=\bfseries,
    },
    label style={font=\sffamily\small\bfseries}
}%
   \lstdefinestyle{mystyle}{
      % backgroundcolor=\color{backcolour},   
      frame=single,
      framexleftmargin=0pt,
      commentstyle=\color{green},
      keywordstyle=\color{blue}\bfseries,
      numberstyle=\tiny\color{gray},
      stringstyle=\color{purple},
      basicstyle=\scriptsize\ttfamily\bfseries,
      breakatwhitespace=false,         
      breaklines=false,                 
      captionpos=b,                    
      keepspaces=true,     
    %   xleftmargin=0em,
    %   xrightmargin=0em,
      numbers=none,                    
      numbersep=4pt,                  
      showspaces=false,                
      showstringspaces=false,
      showtabs=false,                  
      tabsize=2,
      language=Java,
    %   moredelim=**[is][\underbar]{_}{_},
      escapechar=\%
      % moredelim=**[is][\btHL]{`}{`},
      % moredelim=**[is][{\btHL[fill=red!40]}]{@}{@},%,draw=red,dashed,thick  
   }
\definecolor{light-gray}{gray}{0.80}
\newcommand{\tool}{\textsc{NeurSA}\xspace}
\newcommand{\tabincell}[2]{\begin{tabular}{@{}#1@{}}#2\end{tabular}}
\newcommand{\wyc}[1]{{\color{red} \sf #1}}
\newcommand{\ignore}[1]{\iffalse #1 \fi}
\newcommand{\etal}{\hbox{\emph{et al.}}\xspace}
\newcommand{\eg}{\hbox{\emph{e.g.}}\xspace}
\newcommand{\ie}{\hbox{\emph{i.e.}}\xspace}
\newcommand{\st}{\hbox{\emph{s.t.}}\xspace}
\newcommand{\wrt}{\hbox{\emph{w.r.t.}}\xspace}
\newcommand{\etc}{\hbox{\emph{etc.}}\xspace}
\definecolor{OliveGreen}{rgb}{0,0.6,0}
\DeclareMathOperator*{\maxx}{max}
\newcommand{\rBugs}{50\xspace}
\newcommand{\cBugs}{3\xspace}
\newcommand{\fBugs}{7\xspace}
\begin{document}

%% Title information
\title[Learning a Static Bug Finder from Data]{Learning a Static Bug Finder from Data}
                                        %% when present, will be used in
                                        %% header instead of Full Title.
%\titlenote{with title note}             %% \titlenote is optional;
                                        %% can be repeated if necessary;
                                        %% contents suppressed with 'anonymous'
%\subtitle{Subtitle}                     %% \subtitle is optional
%\subtitlenote{with subtitle note}       %% \subtitlenote is optional;
                                        %% can be repeated if necessary;
                                        %% contents suppressed with 'anonymous'

%% Author information
%% Contents and number of authors suppressed with 'anonymous'.
%% Each author should be introduced by \author, followed by
%% \authornote (optional), \orcid (optional), \affiliation, and
%% \email.
%% An author may have multiple affiliations and/or emails; repeat the
%% appropriate command.
%% Many elements are not rendered, but should be provided for metadata
%% extraction tools.

%% Author with single affiliation.
\author{Yu Wang}
%\authornote{with author1 note}          %% \authornote is optional;
                                        %% can be repeated if necessary
\orcid{nnnn-nnnn-nnnn-nnnn}             %% \orcid is optional
\affiliation{
  \position{}
  \department{State Key Laboratory of Novel Software Technology}              %% \department is recommended
  \institution{Nanjing University}            %% \institution is required
  \streetaddress{}
  \city{Nanjing}
  \state{Jiangsu}
  \postcode{210023}
  \country{China}                    %% \country is recommended
}
\email{yuwang_cs@nju.edu.cn}          %% \email is recommended

\author{Fengjuan Gao}
%\authornote{with author1 note}          %% \authornote is optional;
                                        %% can be repeated if necessary
\orcid{nnnn-nnnn-nnnn-nnnn}             %% \orcid is optional
\affiliation{
  \position{}
  \department{State Key Laboratory of Novel Software Technology}              %% \department is recommended
  \institution{Nanjing University}            %% \institution is required
  \streetaddress{}
  \city{Nanjing}
  \state{Jiangsu}
  \postcode{210023}
  \country{China}                    %% \country is recommended
}
\email{fjgao@smail.nju.edu.cn}          %% \email is recommended

\author{Linzhang Wang}
%\authornote{with author1 note}          %% \authornote is optional;
                                        %% can be repeated if necessary
\orcid{nnnn-nnnn-nnnn-nnnn}             %% \orcid is optional
\affiliation{
  \position{}
  \department{State Key Laboratory of Novel Software Technology}              %% \department is recommended
  \institution{Nanjing University}            %% \institution is required
  \streetaddress{}
  \city{Nanjing}
  \state{Jiangsu}
  \postcode{210023}
  \country{China}                    %% \country is recommended
}
\email{lzwang@nju.edu.cn}          %% \email is recommended

%% Author with two affiliations and emails.
\author{Ke Wang}
\authornote{Ke Wang is the corresponding author.}          %% \authornote is optional;
                                        %% can be repeated if necessary
\orcid{nnnn-nnnn-nnnn-nnnn}             %% \orcid is optional
\affiliation{
  \position{}
  \department{Visa Research}              %% \department is recommended
  \institution{Visa Inc.}            %% \institution is required
  \streetaddress{}
  \city{Palo Alto}
  \state{CA}
  \postcode{}
  \country{USA}                    %% \country is recommended
}
\email{kewang@visa.com}          %% \email is recommended

%\blfootnote{Authors' addresses: Yu Wang and Fengjuan Gao, Nanjing University, China, \{yuwang\_cs, fjgao\}@smail.nju.edu.cn; Linzhang Wang, Nanjing University, China, lzwang@nju.edu.cn; Ke Wang, Visa Research, U.S.A, kewang@visa.com.}

%% Abstract
%% Note: \begin{abstract}...\end{abstract} environment must come
%% before \maketitle command
% !TeX root = ../ggnn.tex

\begin{abstract}\label{sec:Abstract}

% Static analysis is an effective technique to catch bugs early when they are cheap to fix. Recent advances in program reasoning theory have led to increasing adoption of static analyzers in software engineering practice. Despite the significant progress, the room for improvement remains. In this paper, 

We present an alternative approach to creating static bug finders. 
Instead of relying on human expertise, we utilize deep neural networks
to train static analyzers directly from data. In particular, we frame the problem of bug finding as a classification task and train a classifier to differentiate the buggy from non-buggy programs using Graph Neural Network (GNN). Crucially, we propose a novel interval-based propagation mechanism that leads to a significantly more efficient, accurate and scalable generalization of GNN. 

We have realized our approach into a framework, \tool, and extensively evaluated it. In a cross-project prediction task, three neural bug detectors we instantiate from \tool are effective in catching null pointer dereference, array index out of bound and class cast bugs in unseen code. 
We compare \tool against several static analyzers (\eg Facebook Infer and Pinpoint) on a set of null pointer dereference bugs. Results show that \tool is more precise in catching the real bugs and suppressing the spurious warnings. We also apply \tool to several popular Java projects on GitHub and discover 50 new bugs, among which 9 have been fixed, and 3 have been confirmed.

% We are in active discussion with Visa Inc for possible adoption of \tool in their software development cycle. Due to the effectiveness and generality, we expect \tool to be helpful in improving the quality of their code base.  
\end{abstract}

%% 2012 ACM Computing Classification System (CSS) concepts
%% Generate at 'http://dl.acm.org/ccs/ccs.cfm'.
\begin{CCSXML}
<ccs2012>
<concept>
<concept_id>10011007.10011006</concept_id>
<concept_desc>Software and its engineering~Software notations and tools</concept_desc>
<concept_significance>500</concept_significance>
</concept>
<concept>
<concept_id>10011007.10011006.10011008</concept_id>
<concept_desc>Software and its engineering~General programming languages</concept_desc>
<concept_significance>500</concept_significance>
</concept>
</ccs2012>
\end{CCSXML}

\ccsdesc[500]{Software and its engineering~Software notations and tools}
\ccsdesc[500]{Software and its engineering~General programming languages}
%% End of generated code

%% Keywords
%% comma separated list
% \keywords{Bug Detection, Deep Neural Network}  %% \keywords are mandatory in final camera-ready submission

%% \maketitle
%% Note: \maketitle command must come after title commands, author
%% commands, abstract environment, Computing Classification System
%% environment and commands, and keywords command.
\maketitle

% !TeX root = ../ggnn.tex

\section{Introduction}\label{sec:Intro}

Static analysis is an effective technique to catch bugs early when they are cheap to fix.
% Static analysis is an important method for finding software defects. It's performed early in the development cycle to catch bugs as quickly as they appear, paving the way for easy fixes. 
Unlike dynamic analysis, static analysis reasons about every path in a program, offering formal guarantees for its run-time behavior. As an evidence of their increasing maturity and popularity, many static analyzers have been adopted by major tech companies to prevent bugs leaked to their production code. Examples include Google's Tricorder~\citep{Sadowski2015}, Facebook's Getafix~\citep{scott2019getafix} and Zoncolan, and Microsoft's Visual Studio IntelliCode.
% Examples include Microsoft's SLAM~\cite{Ball2002} Google's Error Prone~\cite{6392098}, Facebook's Infer~\cite{calcagno2015moving}, Coverity~\cite{Bessey2010} and Astr\'ee~\cite{Blanchet}.

Despite the significant progress, static analyzers suffer from several well-known issues. One, in particular, is the high false positive rate which tends to overshadow true positives and hurt usability. The reason for this phenomenon is well-known: all nontrivial program properties are mathematically
undecidable, meaning that automated reasoning of software generally must involve approximation. 
% grow larger, static analyzers choose to over approximate the program semantics to alleviate the scalability challenges, 
% which inevitably introduces imprecision (\eg errors flagged by static analyzers do not occur in reality). 
On the other hand, problems of false negatives also need to be dealt with. Recently,~\citet{Habib2018} investigated how effective the state-of-the-art static analyzers are in handling a set of real-world bugs. Habib~\etal show more than 90\% of the bugs are missed, exposing the severity of false negatives.

To tackle the aforementioned weaknesses, this paper explores an alternative approach to create static bug checkers---neural bug detection. Our observation is bug patterns exist even for those targeted by static analyzers. Therefore, machine learning techniques offer a viable solution. The challenge though is how to design a model that is effective in catching such bug patterns, which are non-trivial and can even be quite complex. Although Graph Neural Networks (GNN) have seen success across many program analysis tasks~\cite{li2015gated,allamanis2017learning,si2018learning}, they have been utilized only to catch specialized, relatively syntactic bugs in variable naming~\cite{allamanis2017learning}, hence limited impact. To elevate GNN's capability to a sufficient extent for detecting semantic bugs, we invent a novel, efficient propagation mechanism based on graph intervals. Our insight is specializing GNN to exploit the program-specific graph characteristics enhances their capacity of learning complex programming patterns. Specifically, our propagation mechanism forces GNN to attend to certain code constructs to extract deep, semantic program features. Moreover, our propagation mechanism operates on hierarchical graphs to alleviate the scalability issues GNN often encounters when learning from larger graphs.

We realize our approach into a framework, called \tool, that is general (\ie language-agnostic), and extensible (\ie not restricted to certain bug types). \tool performs inter-procedural analysis for bug finding. The framework first automatically scrapes training data across multiple project corpus, creating a large set of buggy and clean code examples; then trains a model to differentiate between these two; and finally uses the trained model for detecting bugs in previously unseen code. We present three neural bug detectors based on \tool targeting null pointer dereference, array index out of bound and class cast exceptions. Figure~\ref{fig:1}-\ref{fig:3} depict three real bugs (from the dataset introduced by~\citet{ye2014learning} and~\citet{just2014defects4j}) that are caught by our neural bug detectors. 
% Besides, we include the details of our further investigation in Figure~\ref{fig:b},~\ref{fig:d} and~\ref{fig:f}.
% In addition, we further probe \tool in an attempt to disclose its reasoning strategy for detecting each bug. Figure~\ref{fig:exa} describes in details the action we undertake and the results we obtain. 
In principle, extending \tool to create new bug detectors only requires the bug type information as the entire workflow will be automatically performed by \tool.
% a training data generator that extracts buggy code examples from a given code corpus, since the rest will be automatically performed by \tool. 
In theory, compared to the soundy static analysis tools~\cite{livshits2015defense}, our learner-based bug detectors are inherently unsound, 
% due to the nature of our methodology
however, their strength lies in their far lower cost of design (\ie no human expertise is required), 
% offers compelling benefits, among which the most prominent is a relatively static analysis tools, 
and comparable or even superior performance in practice.

\begin{figure}
    \centering
    \adjustbox{max width=.98\columnwidth}{
	\lstset{style=mystyle}
	\lstinputlisting[numbers=left,linewidth=8.5cm,framexleftmargin=8pt]{Codes/npe.java}}
	\caption{A null pointer dereference bug (bugs-dot-jar\_MNG-5613\_bef7fac6). The presence of the exception handling routine at line 9-10 indicates the function at line 4 may throw an exception, in which case the value of \texttt{projectDependencyGraph} is likely to be null; subsequently, a null pointer dereference will be triggered at line 7.}
    \label{fig:1}
\end{figure}

\begin{figure}
    \centering
    \adjustbox{max width=.98\columnwidth}{
	\lstset{style=mystyle}
	\lstinputlisting[numbers=left,linewidth=8.3cm,framexleftmargin=8pt]{Codes/index.java}}
	\caption{An array index out of bound bug (eclipse.jdt.ui-2f5e78c). Since the pre-incrementor at line 14 increments the value of \texttt{i} right before accessing the array \texttt{buffer}, an array index out of bound exception will be triggered in the iteration where \texttt{i} equals to \texttt{length}-1.}
    \label{fig:2}
\end{figure}

\begin{figure}
    \centering
    \adjustbox{max width=.98\columnwidth}{
	\lstset{style=mystyle}
	\lstinputlisting[numbers=left,linewidth=8.3cm,framexleftmargin=8pt]{Codes/cast.java}}
	\caption{An class cast bug (eclipse.jdt.ui-c0e0634). If \texttt{isInterface} (line 6) is evaluated to false, it's possible \texttt{parentType} is not an instance of \texttt{TypeDeclaration}, in which case a class cast exception will be triggered at line 11.}
    \label{fig:3}
\end{figure}

Our approach significantly differs from almost all related work in the literature~\cite{wang2016automatically,pradel2018deepbugs,allamanis2017learning}. Specifically, we consider deep and semantic bugs that are proven to be hard even for state-of-the-art static analyzers instead of shallow and syntactic bugs targeted by other works~\cite{wang2016automatically,allamanis2017learning,pradel2018deepbugs}. 
% In addition, our neural bug detectors are trained exclusively on the real world programs, which help them to be
% effective against complicated bugs in unseen code. 
% data set containing exclusively real programs written by developers. The quality of our training data helps \tool to create powerful static bug checkers that are 
Additionally, \tool pinpoints a bug in a program to a line instead of predicting an entire file to be buggy or not~\cite{wang2016automatically}.

We evaluate \tool by training its three instantiations on 30,000 methods extracted from hundreds of Java projects and detecting bugs in 1,700 methods from a different set of 13 Java projects. %28,344 files and searching bugs in another 73 projects containing 88,162 
In total, our corpus amounts to 1,047,296 lines. We find that all neural bug detectors are effective with each achieving above 35\% precision and 45\% recall. 

% We evaluate \tool by training its three instantiations on six Java projects consisting of 28,344 files and searching bugs in another 73 projects containing 88,162 Java files. In total our corpus amounts to 7,547,096 lines. We find that our neural bug detectors are effective, achieving above 38.9\% precision, 79.6\% recall and 52.2\% F1 score.
% In particular, when predicting at the level of methods, the three models yield on average 38.9\% precision, 79.6\% recall and 52.2\% F1 score, even at the level of lines, three models still achieve 73.3\% recall. 
% We also find out the neural bug detectors perform equally well in the within- and cross-project prediction. This finding indicates that bugs of the same kind display similar characteristics despite being authored by different developers. 
To further demonstrate the utility of \tool, we compare our neural bug detector with several static analysis tools (\eg Facebook's Infer~\cite{calcagno2015moving,berdine2005smallfoot}, Pinpoint~\cite{shi2018pinpoint}, \etc). Our comparison focuses on the performance of each checker in catching null pointer dereference bugs, the only kind of bugs that can be handled by all tools. Results show that our neural bug detector catches more bugs and produces less spurious warnings than any evaluated static analyzer. We have also applied our neural bug detectors to 17 popular projects on GitHub and discovered 50 new bugs, among which 9 have been fixed and 3 have been confirmed (fixes pending).

We make the following contributions:
\begin{itemize}
    \item We propose a deep neural network based methodology for building static bug checkers. Specifically, we utilize GNN to train a classifier for differentiating buggy code from correct code.
    \item  We propose a novel interval-based propagation model that significantly enhances the capacity of GNN in learning programming patterns.
    \item We design and implement a framework to streamline the creation of neural bug detectors
    % including automatic data preparation and model training. 
    The framework is open-sourced at \url{https://github.com/anonymoustool/NeurSA}.
    \item We publish our data set at \url{https://figshare.com/articles/datasets_tar_gz/8796677} for the three neural bug detectors we built based on \tool to aid the future research activity.
    \item We present the evaluation results showing our neural bug detectors are highly precise in detecting the semantic bugs in real-world programs and outperform many flagship static analysis tools including Infer in catching null pointer dereference bugs. 
\end{itemize}

\section{Preliminary}\label{sec:Pre}

First, we revisit the definition of connected, directed graphs. Then we give a brief overview of interval~\cite{Allen:1970:CFA:800028.808479} and GNN~\cite{gori2005new}, which our work builds on.

\subsection{Graph}
A graph $\mathcal{G}\! =\! (\mathcal{V}, \mathcal{E})$ consists of a set of nodes $\mathcal{V} \!= \!\{v_1, ..., v_m\}$, and a list of directed edge sets $\mathcal{E} = (\mathcal{E}_1, . . . , \mathcal{E}_K)$ where $K$ is the total number of edge types and $\mathcal{E}_k$ is a set of edges of type $k$. $E(v_s, v_d, k), k\in(1,...,K)$ denotes an edge of type $k$ directed from node $v_s$ to node $v_d$. For graphs with only one edge type, $E$ is represented as $(v_s, v_d)$.

The immediate successors of a node $v_i$ (denoted as $\textit{post}\,(v_i)$) are all of the nodes $v_j$ for which $(v_i, v_j)$ is an edge in $\mathcal{E}$. The immediate predecessors of node $v_j$  (denoted as $\textit{pre}\,(v_j)$) are all of the nodes $v_i$ for which $(v_i, v_j)$ is an edge in $\mathcal{E}$. 

A path is an ordered sequence of nodes $(v_j,..., v_k)$ and their connecting edges, in which each $v_i$ $\in$ $\textit{pre}\,(v_{i+1})$ for $i \in (j,...,k-1)$. A closed path is a path in which the first and last nodes are the same. The successors of a node $v_i$ (denoted as $\textit{post}\,^{*}(v_i)$) are all of the nodes $v_j$ for which there exists a path from $v_i$ to $v_j$. The predecessors of a node $v_j$  (denoted as $\textit{pre}\,^{*}(v_j)$) are all of the nodes $v_i$ for which there exists a path from $v_i$ to $v_j$.

\subsection{Interval}
% subsection 1: Intervals 
% formalization follow (https://amturing.acm.org/p137-allen.pdf) to define graph, path, closed path, intervals

%Given a graph, we construct intervals based on the following definition \cite{allen1976program}:

Introduced by~\citet{Allen:1970:CFA:800028.808479},
%Given a node \textit{h} (stands for head node), 
an interval \textit{I(h)} is the maximal,
single entry subgraph in which \textit{h} is the only entry node
and all closed paths contain \textit{h}. The unique
interval node \textit{h} is called the interval head or simply the
header node. An interval can be expressed in terms of
the nodes in it: $I(h) = \{v_l, v_2, ... ,v_m\}. $

By selecting the proper set of header nodes, a graph can be partitioned into a set of disjoint intervals. An algorithm for such a partition is shown in Algorithm \ref{alg:intervals}. The key is to add to an interval a node only if all of whose immediate predecessors are already in the interval (Line \ref{li:w2} to \ref{li:s2}). The intuition is such nodes when added to an interval keep the original header node as the single entry of an interval. To find a header node to form another interval, a node is picked that is not a member of any existing intervals although it must have a (not all) immediate predecessor being a member of the interval that is just computed (Line \ref{li:w3} to \ref{li:s3}). We repeat the computation until reaching the fixed-point where all nodes are members of an interval.
 
%Note that for an interval $I(h)$,  it adds nodes whose all immediate predecessors are already in $I(h)$ (at line \ref{li:w2} to \ref{li:s2})
%After constructing $I(h)$, we add all nodes to $H$ if they meet the following condition: they are not already in $H$ and are not in $I(h)$ but have immediate predecessors in $I(h)$ (at lines \ref{li:w3} to \ref{li:s3}).
%Last, add $I(h)$ to $\mathcal{S}$ and repeat the procedure until $H$ is empty.

\begin{algorithm}[tbp]
\small
\caption{Finding intervals for a given graph} 
% \newline
% \textbf{Input}: Graph $\mathcal{G}$, Node set $\mathcal{V}$ \newline
% \textbf{Output}: Interval set $\mathcal{S}$
% }
\label{alg:intervals}
\raggedright
\KwIn {Graph $\mathcal{G}$, Node set $\mathcal{V}$}
\KwOut {Interval set $\mathcal{S}$}
// $v_0$ is the unique entry node for the graph

H = \{$v_0$\}\;
\While{$H \neq \emptyset$}
{
        // remove next $h$ from $H$
        
        h = H.pop()\;
        I(h) = \{h\}\;
        // only nodes that are neither in the current interval nor any other interval will be considered

        \While{$\{v \in \mathcal{V} \,| \, v \notin I(h) \wedge \nexists s (s \in \mathcal{S} \wedge v \in s) \wedge  pre(v) \subseteq I(h)\} \neq \emptyset$\label{li:w2}}  
        {
           I(h) = I(h) $\cup$ \{ v \}\; \label{li:s2}
        }
        // find next headers
        
        \While{$\{v \in \mathcal{V}  \,| \, \nexists s_{1} (s_{1} \in \mathcal{S} \wedge v \in s_{1}) \, \wedge\!$ $\exists m_{1}, m_{2}$ $(m_{1}$ $\in$ $pre(v)$ $\wedge$ $m_{2} \in pre(v) \wedge m_{1} \in I(h) \wedge m_{2} \notin I(h)) \} \neq \emptyset \; \; \;$ \label{li:w3}}
%        $\exists v \in \mathcal{V}$ such that $v \notin H$ and 
%        $v \notin I(h)$ and   
%            $\exists m \in imme\_pred(v)$ such that $m \in I(h)$ \label{li:w3}}
        {
            H = H $\cup$ \{ v \}\; \label{li:s3}
        }
        $\mathcal{S}$ = $\mathcal{S}$ $\cup$ I(h)\;
}
\end{algorithm}

The intervals on the original graph 
are called the first order intervals denoted by 
$I^1(h)$, and the graph from which they were derived 
is called first order graph (also called the set 
of first order intervals) $\mathcal{S}^1$ \st 
$I^1(h) \in \mathcal{S}^1$. By making each first order 
interval into a node and each interval exit edge into 
an edge, the second order graph can be derived, from 
which the second order intervals can also be defined. 
The procedure can be repeated to derive successively 
higher order graphs until the n-th order graph consists 
of a single node\footnote{Certain graphs can't be reduced to single nodes.}. Figure~\ref{fig:interval} illustrates 
such a sequence of derived graphs. 

%As we can see that the result graph may also be reducible, thus, we refer $\mathcal{S}^i$ as the i-th order of graph. 
%For example, $I^1(h) \in \mathcal{S}^1$ means that the interval $I^1(h)$ is an element of the first order graph.
%Figure \ref{fig:interval} shows four interval graphs.
%For example, $G^1$ is the original graph, four intervals below $G^1$ are intervals computed based on $G^1$.
%Then, we construct $G^2$ according to $\{I^1(h), h \in \{1,2,3,7\}\}$ and repeat the way of constructing intervals.
%Finally, there is only one interval in $G^3$, which means that $G^4$ is only one node.

% \begin{figure}[t]
% %\centering
% \small
% \begin{lstlisting}[language=Java,showstringspaces=false,numbers=none, framexleftmargin=0pt]
% public int substringIndices(String s1, String s2, 
%     int maxTry){
%   int M = s1.length(); 
%   int N = s2.length();  
%   int i = 0, j = 0, res = 0, pos = 0;
%   do { 
%     if (pos != maxTry) {
%       do {
%         log("Begin at index: " + i);
%         if (s2.charAt(i + j) != s1.charAt(j)) {
%           res = -1;
%           log("Unequal at: " + j);
%         }
%         else
%           log("Equal at: " + j);
%         j++;
%       } while( j < M );
%     }
%     log("Found at: " + res);
%     i++;
%     res = i;
%     j = 0;
%   } while (i <= N - M)
%   return res;
% }

% \end{lstlisting}

% \vspace*{-5pt}
% \caption{\label{exp1} \textbf{\small Our motivating example}}
% \normalsize
% %\vspace*{-5ex}
% \end{figure}

\begin{figure}[tbp]
\centering
  \includegraphics[width=1\linewidth]{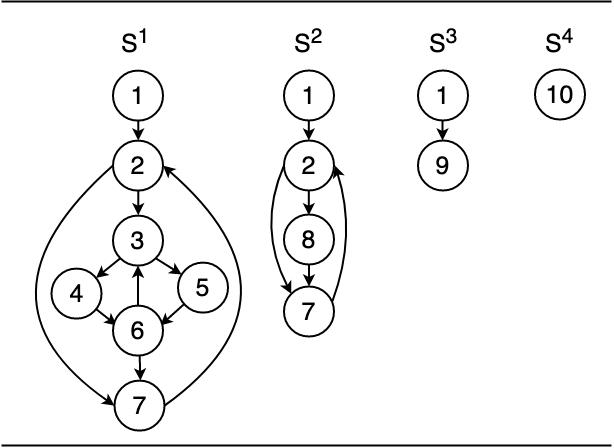}
  \caption{n-th order intervals and graphs. The set of intervals on $\mathcal{S}^1$ are $I^1(1)$=\{1\}, $I^1(2)$=\{2\}, $I^1(3)$=\{3,4,5,6\}, $I^1(7)$=\{7\}. $I^2(1)$=\{1\} and $I^2(2)$=\{2,7,8\} are the second order intervals. $I^3(1)$=\{1,9\} and $I^4(10)$=\{10\} are the only intervals on $\mathcal{S}^3$ and $\mathcal{S}^4$ respectively.}
  \label{fig:interval}
\end{figure}

% subsection 2: GGNN 
% describe how it works give some functions follow (https://www.seas.upenn.edu/~xsi/data/nips18.pdf)
\subsection{Graph Neural Network}
Graph Neural Networks (GNN)~\cite{scarselli2008graph,gori2005new} is a specialized machine learning model designed to learn from graph data.

We extend the definition of a graph to include $\mathcal{M}$ (\ie $G = (\mathcal{V}, \mathcal{E}, \mathcal{M})$). $\mathcal{M}$ is a set of vectors (or embeddings) $\{\mu_{v_1}, ..., \mu_{v_m}\}$, where each $\mu_{v} \in \mathbb{R}^d$ denotes the embedding of a node $v$ in the graph. GNN updates node embeddings via a propagation model. The simplest yet most popular is synchronous message passing systems~\cite{attiya2004distributed} introduced in distributed computing theory. Specifically, the inference is executed as a sequence of rounds: in each round, every node first sends messages to all of its neighbors, and then update its embedding by aggregating all incoming messages.
\begin{equation}
\mu^{(l+1)}_{v} = h(\{\mu^{(l)}_{u}\}_{u \in \mathcal{N}^{k}(v), k \in \{1,2,...,K\}}) \label{equ:h}
\end{equation}

$\mathcal{N}^{k}{(v)}$ denotes the neighbours that are connected to $v$ with edge type $k$, \ie, $\mathcal{N}^{k}{(v)} = \{u|(u,v,k) \in \mathcal{E}\} \cup \{u|(v,u,k) \in \mathcal{E}\} $. $h(\cdot)$ denotes the aggregation function. We repeat the propagation for $L$ steps to update $\mu_{v}$ to $\mu^{(L)}_{v}, \forall v \in \mathcal{V}$. 
%After $L$ steps, we use the state vectors from the last time step as the final node representations.

Some GNNs~\cite{si2018learning} compute a separate node embedding \wrt an edge type (\ie $\mu^{(l+1), k}_{v}$ in Equation~\ref{equ:type}) before aggregating them into a final embedding (\ie $\mu^{(l+1)}_{v}$ in Equation~\ref{equ:sum}).
\begin{align}
\mu^{(l+1), k}_{v} &= \phi_1(\sum_{\mathclap{u \in \mathcal{N}^{k}{(v)}}} \mathbf{W_1} \mu_u^{(l)}), \forall k \in \{1,2,...,K\} \label{equ:type}\\
\mu^{(l+1)}_{v} &= \phi_2(\mathbf{W_2}[ \mu_v^{(l), 1}, \mu_v^{(l), 2}, ..., \mu_v^{(l), K}]) \label{equ:sum}
% \mu^{(0)}_{v} &= \mathbf{W_1}\mu_{v} \label{equ:ini}
\end{align}

$\mathbf{W_1}$ and $\mathbf{W_2}$ are variables to be learned, and $\phi_1$ and $\phi_2$ are some nonlinear activation functions.

To further improve the model capacity,~\citet{li2015gated} proposed Gated Graph Neural Network (GGNN). Their major contribution is a new instantiation of $h(\cdot)$ (Equation~\ref{equ:h}) with Gated Recurrent Units~\cite{cho2014properties}. The following equations describe how GGNN works:
\begin{align}
%\tilde{m}_v^l &= \sum_{\mathclap{u \in \mathcal{N}^{k}{(v)}}} f(\mu_{u}^{(l)}) \label{equ:mes1}\\
%\tilde{m}_v^l &= \mathop{\sum \sum }_{k \in \mathcal{T}\  u \in \mathcal{N}^{k}{(v)}} f(\mu_{u}^{(l)}) \text{, where }\mathcal{T}\text{ is }\{1, 2, ..., K\} \label{equ:mes1}\\
\tilde{m}_v^l &= \sum_{\mathclap{u \in \mathcal{N}{(v)}}} f(\mu_{u}^{(l)}) \label{equ:mes1}\\
%\mu^{(l+1), k}_{v} &= GRU(\tilde{m}_v^l, \mu^{(l)}_{v})\label{equ:mes2}
\mu^{(l+1)}_{v} &= GRU(\tilde{m}_v^l, \mu^{(l)}_{v})\label{equ:mes2}
\end{align}

To update the embedding of node $v$, Equation~\ref{equ:mes1} computes a message $\tilde{m}_v^l$ using $f(\cdot)$ (\eg a linear function) from the embeddings of its neighboring nodes $\mathcal{N}{(v)}$. Next a $GRU$ takes $\tilde{m}_v^l$ and $\mu^{(l)}_{v}$---the current embedding of node $v$---to compute the new embedding, $\mu^{(l+1)}_{v}$ (Equation~\ref{equ:mes2}).
%Then, $GRU$ takes the aggregated message $\tilde{m}_v^l$ of node $v$ and the current state $\mu^{(l)}_{v}$ of node $v$ to compute the next representation of $v$.

% Similarly, the propagation will be repeated for a fixed number of time steps. In the end, we use the embeddings from the last time step as the final node representations.
\section{Interval-Based Propagation Mechanism}
\label{sec:ibpm}

In this section, we present the Interval-Based Propagation Mechanism (IBPM), a new protocol that regulates how nodes exchange messages with their peers in a graph. In particular, we highlight the conceptual advantages IBPM enjoys over the existing propagation mechanism.

\subsection{IBPM's Algorithm for Graph Propagation}
\label{subsec:IBPM}

We discuss two major issues existing works that apply GNN in program analysis suffer from. 

Even though GNN has shown its generality by succeeding in a variety of problem domains (\eg scene graph generation~\cite{8099813}, traffic prediction~\cite{li2017diffusion}, recommendation system~\cite{Ying:2018}, \etc), it can benefit from specializations designed to address the uniqueness of each problem setting, especially when dealing with programs, a fundamentally different data structure. However, prior works made little adjustment to accommodate such program-specific graph characteristics, resulted in a potential loss of model precision. Besides, scalability challenges can also hinder the generalization of GNN. That is when a graph has a large diameter; information has to be propagated over long distance, therefore, message exchange between nodes that are far apart becomes difficult. 

IBPM is designed specifically to tackle the aforementioned weaknesses of GNN. To start with, programs are represented by their control flow graphs. More distinctively, propagation is only allowed on a subgraph defined by an interval. Since iteration statements (\ie loops) often determines the formation of intervals (because loop back edge \eg node 7 to node 2 in Figure~\ref{fig:interval} cuts off the loop header into a new interval), IBPM, in essence, attends to such loop structures when propagating on a control-flow graph. To enable the message-exchange between nodes that are far apart, IBPM switches onto a higher order graph, as such, involving more nodes to communicate within the same interval. On the other hand, by reducing the order of graph, IBPM restores the location propagation, and eventually recovers the node embeddings on the original control flow graph. 
% To elevate the propagation from local to global mode, we increase the order of control flow graphs to enable the message exchange among nodes across the entire graph. 
Below we use the program in Figure~\ref{fig:ibpm}, a function that computes the starting indices of the substring \texttt{s1} in string \texttt{s2}, and its control-flow graph in Figure~\ref{fig:interval} to explain how IBPM works.

\lstset{style=mystyle}
\begin{figure*}
        \centering
        \begin{tikzpicture}
        \matrix[matrix of nodes, 
        nodes={anchor=center},
        column sep=-1em,
        row sep = -.5em]{%
        {%
        \setlength{\fboxsep}{1.5pt}%
        \setlength{\fboxrule}{.5pt}%
        \fbox{\hspace{3pt}\includegraphics[width=0.28\textwidth]{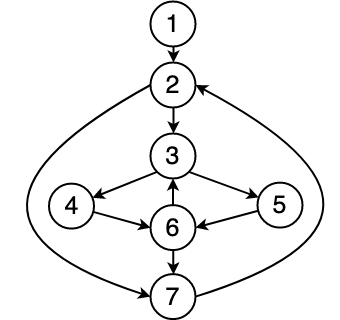}\hspace{6.1pt}}%
        }%
            &
            \thinspace\thinspace%\enspace
            \includegraphics[width=0.03\textwidth]{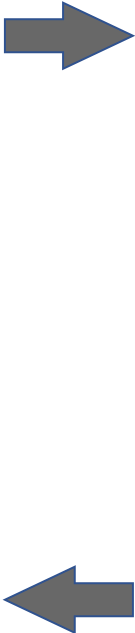}
            &\thinspace\thinspace%\enspace
            {%
                \setlength{\fboxsep}{2.6pt}%
                \setlength{\fboxrule}{.5pt}%
                \fbox{\hspace{3.5pt}\includegraphics[width=0.28\textwidth]{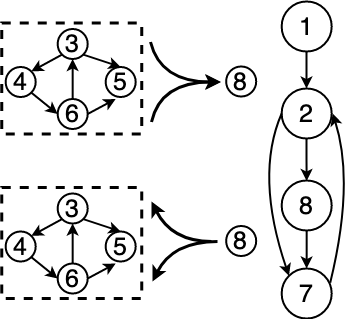}\hspace{5.2pt}}%
             }            
             &
            \thinspace\thinspace%\enspace
            \includegraphics[width=0.03\textwidth]{Figures/Picture1.png}
            \thinspace
            &
            {%
                \setlength{\fboxsep}{5pt}%
                \setlength{\fboxrule}{.5pt}%
                \fbox{\hspace{1.5pt}\includegraphics[width=0.28\textwidth]{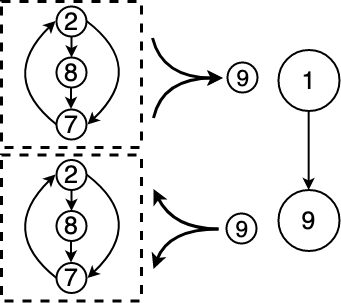}\hspace{2pt}}%
             }
             \\
            \begin{minipage}[t][8ex][t]{0.28\textwidth}
            \subcaption{Propagation among node 3, 4, 5 and 6 on the first order graph.\label{fig:fog}}
            \end{minipage}
            &&
            \begin{minipage}[t][8ex][t]{0.3\textwidth}
            \subcaption{Propagation among node 2, 7 and 8 on the second order graph.\label{fig:sog}}
            \end{minipage}%
            &&
            \begin{minipage}[t][8ex][t]{0.3\textwidth}
            \subcaption{Propagation among node 1 and 9 on the third order graph.\label{fig:tog}}
            % \quad\quad\quad\quad\quad\quad\bigupArrow
            \end{minipage}%
            \\    
            \vspace{-1cm}
            \begin{minipage}[t]{0.28\textwidth}
            	\lstinputlisting[linewidth=5.1cm,xleftmargin=-0.4em,xrightmargin=0em,framexleftmargin=0em]{Codes/IBPM1.java}
            \end{minipage}
            &
            \thinspace\thinspace
            \includegraphics[width=0.03\textwidth]{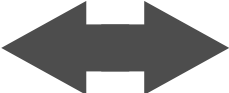}
            &
            \begin{minipage}[t]{0.28\textwidth}
            	\lstinputlisting[linewidth=5.4cm,xleftmargin=0.55em,xrightmargin=0em,framexleftmargin=0em]{Codes/IBPM2.java}
            \end{minipage}
            &
            %\enspace\thinspace
            \includegraphics[width=0.03\textwidth]{Figures/arrowd.png}
            &
            \begin{minipage}[t]{0.28\textwidth}
            	\lstinputlisting[linewidth=5.3cm,xleftmargin=0.15em,xrightmargin=0em,framexleftmargin=0em]{Codes/IBPM3.java}
            \end{minipage}
            \\
            \begin{minipage}[t][7ex][t]{0.28\textwidth}
            \subcaption{Illustration of IBPM on the first order graph using example program.\label{fig:tran1}}
            % \quad\quad\quad\quad\quad\quad\bigdownArrow
            \end{minipage}
            &&
            \begin{minipage}[t][7ex][t]{0.28\textwidth}
            \subcaption{Illustration of IBPM on the second order graph using example program.\label{fig:tran2}}
            \end{minipage}%
            &&
            \begin{minipage}[t][7ex][t]{0.28\textwidth}
            \subcaption{Illustration of IBPM on the third order graph using example program.\label{fig:tran3}}
            \end{minipage}%
            \\%[4pt]
        };
        \end{tikzpicture}
        \caption{Interval-Based Propagation Mechanism.\label{fig:ibpm}}
    \end{figure*}

Depicted in Figure~\ref{fig:fog}, IBPM starts with the first order graph (\ie the original control flow graph). As explained earlier, we restrict communication among nodes within the same interval. Consequently, node 3, 4, 5 and 6 will freely pass messages to the neighbors along their directed edges. On the contrary, node 1 and 2, which are alone in their respective intervals, will not contact any other node in the graph. Such propagation that occurs internal to an interval is carried out in the same manner as the standard propagation (Equation~\ref{equ:h}). Figure~\ref{fig:tran1} highlights the part of the example program that corresponds to the only active interval in the current step of IBPM. Apparently, IBPM focuses on learning from the inner loop while leaving the rest of the program unattended. 

We then move onto the second order graph as shown in Figure~\ref{fig:sog}. Recall how a second order graph is derived. We compute a new node (\ie node 8) to replace the only active interval on the first order graph (\ie node 3, 4, 5 and 6). Equation~\ref{equ:merge} defines how to initialize node 8 or any other node that is created when IBPM transitions from lower-order to higher-order graphs.
\begin{equation}\label{equ:merge}
\mu^{(0)}_{w} = \sum_{\mathclap{v \in I^{i}(h)}} \alpha_{v}\mu^{(l)}_{v}
\end{equation}

where $w$ denotes the node created out of the interval $I^{i}(h)$ and $\alpha_{v}$ is defined by 
\begin{equation}\label{equ:weight}
\alpha_{v} = \frac{exp\,(\mu^{(l)}_{v})}{\sum\limits_{\mathclap{v \in I^{i}(h)}} exp\,(\mu^{(l)}_{v})}         
\end{equation}

% \sum_{\mathclap{v \in I^{i}(h)}} exp(\mu^{(l)}_{v})

Due to the same restriction IBPM imposes, propagation on the second order graph now covers node 2, 7 and 8. Because node 8 can be deemed as a proxy of node 3-6, this propagation, in essence, covers all but node 1 on the first order graph. In other words, even though propagation is always performed within each interval, increasing the order of a graph implicitly expands the region of an interval, therefore facilitating more nodes to communicate. For visualization, Figure~\ref{fig:tran2} highlights the part of the program---the nested loop structure---IBPM attends to at the current step.
% represented by the only active interval on the second order graph in Figure~\ref{fig:tran2}. Apparently, IBPM gradually transitions from local to global mode by covering the entire nested loop structure. 

Figure~\ref{fig:tog} shows the third step of IBPM, where propagation spans across the entire graph. Upon completion of the message-exchange between node 1 and 9, IBPM transitions back to the lower order graphs; interval nodes will be split back into the multitude of nodes from which they formed (\eg node 9 to node 2, 7 and 8; node 8 to node 3-6). Equation below defines how to compute the initial embeddings for a node that is split from an interval. $\mu_{w}$ and $\alpha_{v}$ denote the same meaning in Equation~\ref{equ:merge}.
\begin{equation*}\label{equ:split}
\mu^{(0)}_{v} = \alpha_{v}\mu^{(l)}_{w}
\end{equation*}

Eventually, the node embeddings on the original control flow graph will be recovered. Worth mentioning recurring local propagation on the lower order graphs improves the accuracy of node embeddings by incorporating the previously accumulated global graph knowledge on the higher order graphs. In return, more precise modeling of the subgraphs also benefits the further global propagation across the entire graph. As the transitioning between local and global mode repeats, node embeddings will continue to be refined to reflect the properties of a graph.

\subsection{Comparing IBPM to Standard Propagation}
% In this section, we explain in details why IBPM is advantageous over the standard propagation mechanism. 

\paragraph{\textbf{Structured Propagation}}
As described in Section~\ref{subsec:IBPM}, IBPM works with a graph hierarchy, where each level attends to distinct, important code constructs in a program, such as the loops illustrated in Figure~\ref{fig:tran1} and~\ref{fig:tran2}. This is in stark contrast with the standard propagation mechanism, where program structure is largely ignored and messages spread across the entire graph. Intuitively, we believe by utilizing the program constructs to regulate the message exchange, IBPM improves the precision of GNN in learning program properties.

\paragraph{\textbf{Scalable Propagation}}
We prove in theory IBPM also enjoys scalability advantage over the standard propagation mechanism. By scalability, we mean the number of messages that have been passed in total until the graph is sufficiently propagated. Since determining the threshold of sufficient propagation is beyond the scope of this paper, we make reasonable approximations to assist our proof. For simplicity, we assume directed, connected graphs which consist of distinct nodes and homogeneous edges, and synchronous message passing scheme in all definitions, theorems, and proofs.  

\begin{definition}{(Distance)}
\label{def:dis}
Given two nodes $v_{i}$ and $v_{j}$ on a graph $\mathcal{G}$, their distance, denoted by $\theta(v_{i},v_{j})$, is the length of the shortest path from $v_{i}$ to $v_{j}$ on $\mathcal{G}$.
% that consists of the least number of edges.
\end{definition}

\begin{definition}{(Diameter)}
\label{def:dia}
The diameter of a graph $\mathcal{G}$, $\pi(\mathcal{G})$, is the longest distance between any two nodes on $\mathcal{G}$. Assuming $\mathcal{V}$ is the set of nodes on $\mathcal{G}$.
\end{definition}
\begin{equation*}
\pi(\mathcal{G}) = \!\!\!\maxx_{v_{i},v_{j} \in \mathcal{V}} \! \! \!  \theta(v_{i},v_{j})        
\end{equation*}

\noindent
We define the point of sufficient propagation as follows.

\begin{definition}{(Fixed Point)}
\label{def:fp}
Given a graph $\mathcal{G}\! =\! (\mathcal{V}, \mathcal{E})$, associate each node $v \in \mathcal{V}$ with a set $\sigma_{v}$, which is initialized to contain the node itself, denoted by $\sigma_{v}^{0} = \{v\}$. After the $l$-th round of message passing, $\sigma_{v}^{l}$ is updated to $\bigcup_{u \in \mathcal{N}(v)} \sigma_{u}^{l-1}$, where $\mathcal{N}(v)$ refers to the neighbouring nodes of $v$. The propagation is said to reach the fixed point on $\mathcal{G}$ after $l$-th round of message passing iff (1) $\forall l^{\prime}\geq l,\, \forall v \in \mathcal{V},\, \sigma_{v}^{l} = \sigma_{v}^{l^{\prime}}$; and (2) $\nexists l^{\prime}<l,\, \forall l^{\prime\prime}\geq l^{\prime},\, \forall v \in \mathcal{V},\, \sigma_{v}^{l^{\prime}} = \sigma_{v}^{l^{\prime\prime}}$.
\end{definition}

% \theoremstyle{definition}
% \begin{assumption}{(Distinctiveness)} \label{assume:dis}
% Assume, without the loss of generality, that nodes on a graph $\mathcal{G}\! =\! (\mathcal{V}, \mathcal{E})$ are distinct. Formally, $\nexists v_{i} \in \mathcal{V} ,\nexists v_{j} \in \mathcal{V}, v_{i} \ne v_{j} \wedge \sigma_{v_{i}}^0 \cap \sigma_{v_{j}}^0 \neq \emptyset$.
% \end{assumption}\ke{why is this a reasonable assumption}
% \begin{assume}[Distinctiveness]
% Hereinafter, we assume, without the loss of generality, that nodes on $\mathcal{G}$ are distinct. Formally, $\nexists v_{i} \in \mathcal{V} ,\nexists v_{j} \in \mathcal{V}, v_{i} \ne v_{j} \wedge \sigma_{v_{i}}^0 \cap \sigma_{v_{j}}^0 \neq \emptyset$.
% \end{assume}

% Hereinafter, we assume, without the loss of generality, that nodes on $\mathcal{G}$ are distinct. Formally, $\nexists v_{i} \in \mathcal{V} ,\nexists v_{j} \in \mathcal{V}, v_{i} \ne v_{j} \wedge \sigma_{v_{i}}^0 \cap \sigma_{v_{j}}^0 \neq \emptyset$.

\begin{theorem}[Distance Measure]
\label{the:dm}
Given two nodes $v_{i}$ and $v_{j}$ on a graph (hereinafter assuming all edges are of unit length), if $\sigma_{v_{j}}$ contains $v_{i}$ only after the $l$-th round of message passing, that is $\,\,\forall l^{\prime}<l,\, v_{i} \notin \sigma_{v_{j}}^{l^{\prime}}$ and $\, \forall l^{\prime}\geq l,\, v_{i} \in \sigma_{v_{j}}^{l^{\prime}}$, then $\theta(v_{i},v_{j}) = l$.
\end{theorem}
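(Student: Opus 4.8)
The plan is to prove the two directions $\theta(v_i,v_j)\le l$ and $\theta(v_i,v_j)\ge l$ separately, both by reasoning about the semantics of the set update rule $\sigma_v^l = \bigcup_{u\in\mathcal{N}(v)}\sigma_u^{l-1}$. The key lemma I would establish first, by induction on $l$, is a reachability characterization: $v_i\in\sigma_{v_j}^l$ if and only if there exists a path of length \emph{exactly} $l$ from $v_i$ to $v_j$ in $\mathcal{G}$. Actually, because the update overwrites $\sigma_{v_j}^l$ entirely from the $(l-1)$-th sets of neighbours (rather than accumulating), the cleaner invariant is: $v_i\in\sigma_{v_j}^l$ iff there is a walk of length exactly $l$ from $v_i$ to $v_j$. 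The base case $l=0$ is immediate since $\sigma_v^0=\{v\}$. For the inductive step, $v_i\in\sigma_{v_j}^l$ iff $v_i\in\sigma_u^{l-1}$ for some $u\in\mathcal{N}(v_j)$, iff (by induction) there is a walk of length $l-1$ from $v_i$ to some neighbour $u$ of $v_j$, iff there is a walk of length $l$ from $v_i$ to $v_j$.

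Once this characterization is in hand, the theorem follows quickly. Suppose $v_i$ first appears in $\sigma_{v_j}$ at round $l$, meaning $v_i\notin\sigma_{v_j}^{l'}$ for all $l'<l$ and $v_i\in\sigma_{v_j}^{l}$. By the lemma, there is a walk of length $l$ from $v_i$ to $v_j$, so $\theta(v_i,v_j)\le l$ (a shortest path is no longer than any walk; and a walk of length $l$ can be shortened to a path of length $\le l$, so in particular some path of length $\le l$ exists). For the reverse inequality, let $d=\theta(v_i,v_j)$ be the shortest-path distance, realized by a path of length $d$, which is in particular a walk of length $d$. By the lemma, $v_i\in\sigma_{v_j}^{d}$. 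Since $l$ is the \emph{first} round at which $v_i\in\sigma_{v_j}$, we must have $l\le d$. Combining, $l=d=\theta(v_i,v_j)$.

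One subtlety I would make sure to address: the hypothesis of the theorem asserts not only that $v_i\in\sigma_{v_j}^{l}$ but also the "stability" condition $\forall l'\ge l,\ v_i\in\sigma_{v_j}^{l'}$. I would point out this is automatically consistent (indeed guaranteed once the global fixed point of Definition~\ref{def:fp} is reached), but it is not needed for the distance equality itself — the equality $\theta(v_i,v_j)=l$ follows purely from $l$ being the first round of appearance. However, I may want to remark on why the "first appearance round" is well-defined: since $\mathcal{G}$ is connected, there is at least one walk from $v_i$ to $v_j$, hence by the lemma $v_i$ does appear in $\sigma_{v_j}^{l}$ for some finite $l$ (namely $l=\theta(v_i,v_j)$), so the minimal such $l$ exists.

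The main obstacle, and the only place requiring care, is getting the walk-versus-path bookkeeping exactly right in the lemma: because the update rule is a pure "replacement from neighbours" it tracks walks of a \emph{fixed} length $l$, not walks of length \emph{at most} $l$, so $\sigma_{v_j}^l$ is not monotone in $l$ in general. I would therefore phrase the lemma in terms of exact-length walks and then translate to distances at the end using the standard fact that the existence of a walk of length $d$ between two vertices, with $d$ minimal, coincides with the shortest-path distance. Everything else is a routine two-line induction and a squeeze argument.
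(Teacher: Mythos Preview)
Your proof is correct and more rigorous than the paper's. Both rest on the same underlying observation---that the $l$-th round of synchronous message passing records exactly the length-$l$ walks---but the paper dispatches the theorem in a short contradiction argument (``if $\theta(v_i,v_j)=l'\neq l$, then $v_i$ would have entered $\sigma_{v_j}$ at round $l'$, contradiction''), without isolating or proving that walk correspondence. You instead make the correspondence explicit as an inductive lemma and then close with a clean two-inequality squeeze. Your extra care about walks versus paths and about the non-monotonicity of $\sigma_{v_j}^{l}$ in $l$ is well placed: the paper speaks loosely of ``enumerating all paths of length $l$'' and implicitly treats membership as monotone, whereas with the replacement update rule it is not. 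Your remark that the stability clause $\forall l'\ge l,\ v_i\in\sigma_{v_j}^{l'}$ is unnecessary for the distance equality is also correct (and in fact that clause can fail on simple graphs, so the theorem is sharper without it). In short: same core idea, but your decomposition via the exact-length-walk lemma is the cleaner and more careful argument.
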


\begin{proof}
Assume otherwise \st $\theta(v_{i},v_{j}) = l^{\prime} \wedge l \neq l^{\prime}$. Synchronous message passing system in essence enumerates all possible paths between any two nodes on the graph. In particular, all paths of length $l$ will be enumerated in $l$-th round of message passing. If $\theta(v_{i},v_{j}) = l^{\prime}$, then $l^{\prime}$ rounds would have been needed, after which $\sigma_{v_{j}}$ contains $v_{i}$, which contradicts the assumption $\sigma_{v_{j}}$ contains $v_{i}$ after the $l$-th round of message passing.
\end{proof}

\begin{theorem}[Scalability Measure]
\label{the:sm}
Given a graph $\mathcal{G}\! =\! (\mathcal{V}, \mathcal{E})$, the number of messages that have been sent by all nodes in $\mathcal{V}$ until the propagation reaches fixed point on $\mathcal{G}$ equals to $\pi(\mathcal{G})\, * \lvert \mathcal{E} \rvert$.
\end{theorem}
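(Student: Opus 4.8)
The plan is to combine Theorem~\ref{the:dm}, which pins down exactly when one node enters another node's $\sigma$-set, with an elementary per-round accounting of messages. I would split the argument into two independent claims: (i) the propagation reaches the fixed point on $\mathcal{G}$ after exactly $\pi(\mathcal{G})$ rounds of message passing, and (ii) in each round the nodes of $\mathcal{V}$ collectively send exactly $\lvert\mathcal{E}\rvert$ messages. The theorem then follows by multiplying the two quantities.

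For claim (i), I would first use Theorem~\ref{the:dm} (together with the monotone, round-by-round nature of the $\sigma$-propagation established in its proof) to obtain a closed form for the $\sigma$-sets: for every ordered pair of nodes $v_i,v_j$ we have $v_i\in\sigma_{v_j}^{l}$ iff $l\ge\theta(v_i,v_j)$, hence $\sigma_{v_j}^{l}=\{v_i\in\mathcal{V}\mid \theta(v_i,v_j)\le l\}$. In particular each $\sigma_{v_j}^{l}$ is non-decreasing in $l$ and becomes stationary precisely at $l=\max_{v_i\in\mathcal{V}}\theta(v_i,v_j)$, the eccentricity of $v_j$. Taking the maximum over $v_j$, condition~(1) of Definition~\ref{def:fp} is met at round $l^{\ast}=\maxx_{v_i,v_j\in\mathcal{V}}\theta(v_i,v_j)=\pi(\mathcal{G})$ and no larger round is needed. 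For the minimality condition~(2), I would exhibit a witness pair $(v_i^{\ast},v_j^{\ast})$ attaining the diameter, i.e.\ $\theta(v_i^{\ast},v_j^{\ast})=\pi(\mathcal{G})$; by Theorem~\ref{the:dm}, $v_i^{\ast}\notin\sigma_{v_j^{\ast}}^{l'}$ for every $l'<\pi(\mathcal{G})$ while $v_i^{\ast}\in\sigma_{v_j^{\ast}}^{\pi(\mathcal{G})}$, so for any candidate earlier stopping round $l'<\pi(\mathcal{G})$ the set $\sigma_{v_j^{\ast}}$ still changes between round $l'$ and round $\pi(\mathcal{G})$, ruling out an earlier fixed point. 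Together these show the fixed point is reached after exactly $\pi(\mathcal{G})$ rounds.

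For claim (ii), I would appeal directly to the synchronous message-passing scheme assumed in the paper: in a round every node sends one message to each node it is connected to along an (outgoing) edge, so the total message count in a round is the sum of out-degrees over all nodes, which is exactly $\lvert\mathcal{E}\rvert$ (each edge of $\mathcal{G}$ accounts for one message). Since this holds identically in each of the $\pi(\mathcal{G})$ rounds, the cumulative number of messages sent until the fixed point is $\pi(\mathcal{G})\cdot\lvert\mathcal{E}\rvert$, as claimed.

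The main obstacle I anticipate is the fixed-point bookkeeping in claim~(i): one has to be careful that $\pi(\mathcal{G})$ rounds are \emph{both} sufficient (condition~(1), needing Theorem~\ref{the:dm} in the ``$l\ge\theta$'' direction for every pair) \emph{and} necessary (condition~(2), needing the ``$l<\theta$'' direction together with a diameter-realizing pair), and that the $\sigma$-update in Definition~\ref{def:fp} tracks the rounds of the message-passing scheme one-for-one. The message-count half is essentially a definitional observation and should be routine, modulo fixing the convention that a message travels along each directed edge exactly once per round.
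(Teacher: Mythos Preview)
Your proposal is correct and follows essentially the same approach as the paper: both decompose into ``$\lvert\mathcal{E}\rvert$ messages per round'' plus ``exactly $\pi(\mathcal{G})$ rounds to fixed point,'' and both hinge on Theorem~\ref{the:dm} to tie round counts to distances. The only cosmetic difference is that the paper runs the round-count argument by contradiction (ruling out $l'>\pi(\mathcal{G})$ and $l'<\pi(\mathcal{G})$ separately) whereas you give the direct version via the closed form $\sigma_{v_j}^{l}=\{v_i:\theta(v_i,v_j)\le l\}$; the content is the same.
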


\begin{proof}
Because synchronous message passing system has $\lvert \mathcal{E} \rvert$ message sent in each round on directed graphs. So it suffices to prove that $\pi(\mathcal{G})$ rounds of message passing are needed until propagation reaches the fixed point. Assume otherwise, reaching the fixed point on $\mathcal{G}$ takes $l^{\prime}$ rounds:

(a)  $l^{\prime} > \pi(\mathcal{G})$. This indicates $\exists v_{i} \in \mathcal{V}, \, \exists v_{j} \in \mathcal{V}, \, \forall l^{\prime\prime}<l^{\prime},\, v_{i} \notin \sigma_{v_{j}}^{l^{\prime\prime}} \wedge v_{i} \in \sigma_{v_{j}}^{l^{\prime}}$. According to Theorem~\ref{the:dm}, then $\theta(v_{i},v_{j}) = l^{\prime}$. Since $l^{\prime} > \pi(\mathcal{G})$, therefore contradicting the Definition~\ref{def:dia}. 

(b) $l^{\prime} < \pi(\mathcal{G})$. This indicates $\nexists v_{i} \in \mathcal{V}, \nexists v_{j} \in \mathcal{V}, \nexists l^{\prime\prime} > l^{\prime}, \forall l^{\prime\prime\prime}<l^{\prime\prime}, v_{i} \notin \sigma_{v_{j}}^{l^{\prime\prime\prime}} \wedge v_{i} \in \sigma_{v_{j}}^{l^{\prime\prime}}$. According to Theorem~\ref{the:dm}, then $\nexists v_{i} \in \mathcal{V}, \nexists v_{j} \in \mathcal{V}, \nexists l^{\prime\prime} > l^{\prime}, \theta(v_{i},v_{j}) = l^{\prime\prime}$, and therefore $l^{\prime} \geq \pi(\mathcal{G})$, which contradicts the assumption $l^{\prime} < \pi(\mathcal{G})$.
% As explained above, synchronous message passing system enumerates all paths of length $l$ in $l$-th round of message passing. This indicates $\exists v_{i} \in \mathcal{V}, \, \exists v_{j} \in \mathcal{V}, \, \theta(v_{j},v_{i})=\pi(\mathcal{G}) \wedge v_{j} \notin \sigma_{v_{i}}^{l^{\prime}}$, therefore contradicting the Definition~\ref{def:fp}.
\end{proof}

% We modify Definition~\ref{def:fp} to fit IBPM. 
\noindent
Similarly, we define the saturation point of IBPM.

\begin{definition}{(Fixed Point for IBPM)}
\label{def:fpI}
Given the first order graph $\mathcal{S}^1$, and successively derived higher order graphs, $\mathcal{S}^2,\mathcal{S}^3,...,\mathcal{S}^n$ ($\mathcal{S}^n$ represents the highest order graph), propagation transitions from $\mathcal{S}^i$ to $\mathcal{S}^{i+1}$ (or $\mathcal{S}^{i-1}$) only after the fixed point on all intervals in $\mathcal{S}^i$ has been reached (according to the Definition~\ref{def:fp}). In addition, $\sigma_{v^{i+1}}^{0}$ of a new node $v^{i+1}$ on $\mathcal{S}^{i+1}$ is initialized to $\bigcup_{u \in I^{i}(v)} \sigma_{u}^{l}$, where $I^{i}(v)$ denotes the interval from which $v^{i+1}$ is created, and $l=\pi(I^{i}(v))$. As for a new node $v^{i-1}$ on $\mathcal{S}^{i-1}$, $\sigma_{v^{i-1}}^{0}$ is initialized to $\{v^{i-1}\}$. The whole process is said to reach fixed point iff the propagation successively transitions from $\mathcal{S}^{1}$ to $\mathcal{S}^{n}$ and back to $\mathcal{S}^{1}$, on which subsequent propagation also reaches fixed point.
\end{definition}%$l^{u}$ denotes the number of rounds after which the propagation reaches fixed point on $I^{i}(v)$

\begin{theorem}[Scalability Measure for IBPM]
\label{the:smI}
Given $\mathcal{S}^1,\mathcal{S}^2$, $...,\mathcal{S}^n$ and their respective member intervals $I^1(h_{1}^{1}),...,I^1(h_{\,\rule{.4pt}{1.2ex} \mathcal{S}^{1} \rule{.4pt}{1.2ex}}^{1})$; $I^2(h_{1}^{2}),...,I^2(h_{\,\rule{.4pt}{1.2ex} \mathcal{S}^{2} \rule{.4pt}{1.2ex}}^{2})$; $I^n(h_{1}^n),...,I^n(h_{\,\rule{.4pt}{1.2ex} \mathcal{S}^{n} \rule{.4pt}{1.2ex}}^{n})$, the number of messages that have been sent until IBPM reaches fixed point equals to 
\begin{equation}\label{equ:ibpmsca}
\bigg( \sum_{j=1}^{n-2} \sum_{i=1}^{\lvert \mathcal{S}^{j} \! \rvert} 2 * \pi(I^{j}(h^{j}_{i})) * \lvert \mathcal{E}_{I^{j}(h^{j}_{i})} \rvert \bigg) + \pi(\mathcal{S}^{n-1}) * \lvert \mathcal{E}_{\mathcal{S}^{n-1}} \rvert
\end{equation}

\noindent if $\mathcal{S}^{n}$ is a single node, otherwise
\begin{equation*}
\!\!\!\!\!\!\!\!\!\!\!\!\!\!\bigg( \sum_{j=1}^{n-1} \sum_{i=1}^{\lvert \mathcal{S}^{j} \! \rvert} 2 * \pi(I^{j}(h^{j}_{i})) * \lvert \mathcal{E}_{I^{j}(h^{j}_{i})} \rvert \bigg)  + \pi(\mathcal{S}^{n}) * \lvert \mathcal{E}_{\mathcal{S}^{n}} \rvert
\end{equation*}
\noindent $\mathcal{E}_{g}$ denotes the set of edges in $g$ in both formulas.

% on $\mathcal{S}^1$, $\mathcal{S}^2$,...,\,$\mathcal{S}^n$ equals to $\sum_{j=1}^{n-2} \sum_{i=1}^{\lvert \mathcal{S}^{j} \! \rvert} 2 * \pi(I^{j}(h^{j}_{i})) * \lvert \mathcal{E}_{I^{j}(h^{j}_{i})} \rvert + \pi(\mathcal{S}^{n-1}) * \lvert \mathcal{E}_{\mathcal{S}^{n-1}} \rvert$ if $\lvert \mathcal{E}_{\mathcal{S}^{n}}\rvert = 0$, where $\mathcal{E}_{g}$ denotes the edge set of $g$, otherwise $\sum_{j=1}^{n-1} \sum_{i=1}^{\lvert \mathcal{S}^{j} \! \rvert} 2 * \pi(I^{j}(h^{j}_{i})) * \lvert \mathcal{E}_{I^{j}(h^{j}_{i})} \rvert + \pi(\mathcal{S}^{n}) * \lvert \mathcal{E}_{\mathcal{S}^{n}} \rvert$.
\end{theorem}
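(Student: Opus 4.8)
The plan is to reduce the statement to repeated application of Theorem~\ref{the:sm} over the finite schedule of sub‑propagations that IBPM performs, and then to collect terms. First I would record three elementary facts. (i) By Theorem~\ref{the:sm}, driving any graph $g$ to the fixed point of Definition~\ref{def:fp} costs exactly $\pi(g)\cdot\lvert\mathcal{E}_{g}\rvert$ messages; this applies verbatim to any interval viewed as a graph in its own right. (ii) The intervals $I^{j}(h^{j}_{1}),\dots,I^{j}(h^{j}_{\lvert\mathcal{S}^{j}\rvert})$ partition $\mathcal{S}^{j}$ into node‑ and edge‑disjoint pieces, so running all of them to fixed point—whether one imagines this done sequentially or synchronously in parallel—sends a total of $\sum_{i}\pi(I^{j}(h^{j}_{i}))\cdot\lvert\mathcal{E}_{I^{j}(h^{j}_{i})}\rvert$ messages, since each interval contributes its own count independently and interval edges do not overlap. (iii) The transitions between consecutive order graphs—collapsing an interval into a node via Equation~\ref{equ:merge} on the way up, and splitting it back on the way down—only recompute embeddings and the bookkeeping sets $\sigma$ from data already available at the endpoints (Definition~\ref{def:fpI}), hence cost zero messages.

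Next I would spell out the schedule dictated by Definition~\ref{def:fpI}: IBPM drives the intervals of $\mathcal{S}^{1}$ to fixed point, transitions to $\mathcal{S}^{2}$, drives its intervals to fixed point, and so on, ascending to the topmost order graph; it then turns around and repeats the process descending, re‑initializing the split‑off nodes' $\sigma$ sets at each transition, and terminates once the intervals of $\mathcal{S}^{1}$ are at fixed point again. The crucial combinatorial observation is that the turn‑around level is visited only once, whereas every strictly lower order graph is visited twice—once ascending, once descending. Summing fact (ii) over all visits then produces a leading term $\sum_{j}2\sum_{i}\pi(I^{j}(h^{j}_{i}))\lvert\mathcal{E}_{I^{j}(h^{j}_{i})}\rvert$ over the doubly visited levels, plus a single, un‑doubled contribution from the top level.

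The two displayed formulas are then just the two cases for what the effective top level is. If $\mathcal{S}^{n}$ is a single node it carries no edges, so the excursion through it is vacuous; the effective top is $\mathcal{S}^{n-1}$, which—because $\mathcal{S}^{n}$ being one node forces all of $\mathcal{S}^{n-1}$ into a single interval—is driven to fixed point "globally" at cost $\pi(\mathcal{S}^{n-1})\lvert\mathcal{E}_{\mathcal{S}^{n-1}}\rvert$, exactly once, while levels $1,\dots,n-2$ are each traversed twice; this yields the first formula. If $\mathcal{S}^{n}$ is not a single node it is the fixed limit of the interval derivation, its intervals are singletons carrying no internal edges, and IBPM instead runs ordinary propagation on the whole of $\mathcal{S}^{n}$ once, at cost $\pi(\mathcal{S}^{n})\lvert\mathcal{E}_{\mathcal{S}^{n}}\rvert$, while now levels $1,\dots,n-1$ are each traversed twice; this yields the second formula.

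I expect the main obstacle to be pinning down the "visited once versus twice" accounting at the very top of the hierarchy: one must argue rigorously from Definition~\ref{def:fpI} that no redundant re‑propagation of the topmost non‑trivial graph occurs on the descent, and, in the reducible case, that the vacuous excursion into the single‑node graph $\mathcal{S}^{n}$ does not force $\mathcal{S}^{n-1}$ to be propagated a second time. A secondary point needing care is fact (ii): one must confirm that message counting for node‑disjoint, edge‑disjoint interval propagations is additive, and that a finished interval genuinely stops contributing once its $\sigma$ sets stabilize, so that Theorem~\ref{the:sm} may be applied to each interval in isolation and the results simply added.
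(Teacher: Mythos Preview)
Your proposal is correct and follows the same approach as the paper: decompose IBPM into the sub-propagations on each interval at each order graph and apply Theorem~\ref{the:sm} to each piece, then sum. The paper's own proof is only three sentences and leaves the factor-of-two accounting and the top-level case split implicit; your write-up spells these out carefully, which is a strict improvement in rigor but not a different argument.
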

\begin{proof}
Break IBPM into sub-propagation on each interval at each order of graph. Therefore, the number of messages sent in total can be computed by summing up the messages sent during each sub-propagation. Since all assumptions still hold,
% $\forall t \in [1,n], \nexists v_{i} \in \mathcal{V}_{\mathcal{S}^{t}}, \nexists v_{j} \in \mathcal{V}_{\mathcal{S}^{t}}, \sigma_{v_{i}}^{0} \cap \sigma_{v_{j}}^{0} \neq \emptyset$ ($\mathcal{V}_{\mathcal{S}^{t}}$ denotes the set of nodes on $\mathcal{S}^{t}$), 
Theorem~\ref{the:sm} can be applied to derive the proof.
\end{proof}

Below we compare the scalability of two propagation mechanism. First, we consider the case where $\mathcal{S}^{n}$ is a single node. Add another $\pi(\mathcal{S}^{n-1}) * \lvert \mathcal{E}_{\mathcal{S}^{n-1}} \rvert$ to Equation~\ref{equ:ibpmsca} leads to:
\begin{equation} \label{equ:deri1}
\setlength{\jot}{1pt} 
\begin{split}
&\!\!\!\!\!\bigg( \sum_{j=1}^{n-2} \sum_{i=1}^{\lvert \mathcal{S}^{j} \! \rvert} 2 * \pi(I^{j}(h^{j}_{i})) * \lvert \mathcal{E}_{I^{j}(h^{j}_{i})} \rvert \bigg) + \pi(\mathcal{S}^{n-1}) * \lvert \mathcal{E}_{\mathcal{S}^{n-1}} \rvert < \\
&\!\!\!\!\!\bigg( \sum_{j=1}^{n-2} \sum_{i=1}^{\lvert \mathcal{S}^{j} \! \rvert} 2 * \pi(I^{j}(h^{j}_{i})) * \lvert \mathcal{E}_{I^{j}(h^{j}_{i})} \rvert \bigg)  + 2 * \pi(\mathcal{S}^{n-1}) * \lvert \mathcal{E}_{\mathcal{S}^{n-1}} \rvert
\end{split}
\end{equation}

Absorb $2 * \pi(\mathcal{S}^{n-1}) * \lvert \mathcal{E}_{\mathcal{S}^{n-1}} \rvert$ into the sum at the right hand side of Equation~\ref{equ:deri1} gives
\begin{equation} \label{equ:deri2}
\setlength{\jot}{1pt} 
\begin{split}
\!\!\!\!\!&\bigg( \sum_{j=1}^{n-2} \sum_{i=1}^{\lvert \mathcal{S}^{j} \! \rvert} 2 * \pi(I^{j}(h^{j}_{i})) * \lvert \mathcal{E}_{I^{j}(h^{j}_{i})} \rvert \bigg) + \pi(\mathcal{S}^{n-1}) * \lvert \mathcal{E}_{\mathcal{S}^{n-1}} \rvert < \\
\!\!\!\!\!&\sum_{j=1}^{n-1} \sum_{i=1}^{\lvert \mathcal{S}^{j} \! \rvert} 2 * \pi(I^{j}(h^{j}_{i})) * \lvert \mathcal{E}_{I^{j}(h^{j}_{i})} \rvert
\end{split}
\end{equation}

To simply, assume 
% \begin{equation} \label{equ:deri3}
%      \!\!\!\!\!\!\!\!\!\!\!\!\!\!\!\!\!\!\!\!\!\!\!\!\!\!\!\!\!\!\!\!\!\!\!\!\!\!\!\!\!\forall j \in [1,n-1], \forall i \in [1,\lvert \mathcal{S}^{j} \rvert], \pi(I^{j}(h^{j}_{i}))=\tau
% \end{equation}
\[
\!\!\!\!\!\!\!\!\!\!\!\!\!\!\!\!\!\!\!\!\!\!\!\!\!\!\!\!\!\!\!\!\!\!\!\!\!\!\!\!\!\!\forall j \in [1,n-1], \forall i \in [1,\lvert \mathcal{S}^{j} \rvert], \pi(I^{j}(h^{j}_{i}))=\tau 
\]

then
\begin{equation}\label{equ:deri3}
\!\!\!\!\!\sum_{j=1}^{n-1} \sum_{i=1}^{\lvert \mathcal{S}^{j} \! \rvert} 2 * \pi(I^{j}(h^{j}_{i})) * \lvert \mathcal{E}_{I^{j}(h^{j}_{i})} \rvert
=
2 * \tau * \sum_{j=1}^{n-1} \sum_{i=1}^{\lvert \mathcal{S}^{j} \! \rvert} \lvert \mathcal{E}_{I^{j}(h^{j}_{i})} \rvert    
\end{equation}
% \[
% \!\!\!\!\!\!\!\!\!\!\!\sum_{j=1}^{n-1} \sum_{i=1}^{\lvert \mathcal{S}^{j} \! \rvert} 2 * \pi(I^{j}(h^{j}_{i})) * \lvert \mathcal{E}_{I^{j}(h^{j}_{i})} \rvert
% =
% 2 * \tau * \sum_{j=1}^{n-1} \sum_{i=1}^{\lvert \mathcal{S}^{j} \! \rvert} \lvert \mathcal{E}_{I^{j}(h^{j}_{i})} \rvert
% \]

Because $\sum_{j=1}^{n-1} \sum_{i=1}^{\lvert \mathcal{S}^{j} \! \rvert} \lvert \mathcal{E}_{I^{j}(h^{j}_{i})} \rvert \leq \lvert \mathcal{E} \rvert$. In particular, the equivalence establishes when none of the intervals at any order of graph that has multiple exit edges to the same external node. So we derive

\begin{equation} \label{equ:deri4}
\!\!\!\!\!\!\!\!\!\!\!\!\!\!\!\!\!\!\!\!\!\!\!\!\!\!\!\!\!\!\!\!\!\!\!\!\!\!\!\!\!\!\!\!\!\!\!\!\!\!\!\!\!\!\!\!\!2 * \tau * \sum_{j=1}^{n-1} \sum_{i=1}^{\lvert \mathcal{S}^{j} \! \rvert} \lvert \mathcal{E}_{I^{j}(h^{j}_{i})} \rvert
\leq
2 * \tau * \lvert \mathcal{E} \rvert
\end{equation}

Linking Equation~\ref{equ:deri1}-\ref{equ:deri4}, we derive 

\begin{equation*}
\setlength{\jot}{2pt} 
\begin{split}
\!\!\!\!\!\!\!\!\!&\bigg( \sum_{j=1}^{n-2} \sum_{i=1}^{\lvert \mathcal{S}^{j} \! \rvert} 2 * \pi(I^{j}(h^{j}_{i})) * \lvert \mathcal{E}_{I^{j}(h^{j}_{i})} \rvert \bigg) + \pi(\mathcal{S}^{n-1}) * \lvert \mathcal{E}_{\mathcal{S}^{n-1}} \rvert < \\
\!\!\!\!\!\!\!\!\!&2\tau * \lvert \mathcal{E} \rvert
\end{split}
\end{equation*}
% \[ 
% \sum_{j=1}^{n-2} \sum_{i=1}^{\lvert \mathcal{S}^{j} \! \rvert} 2 * \pi(I^{j}(h^{j}_{i})) * \lvert \mathcal{E}_{I^{j}(h^{j}_{i})} \rvert + \pi(\mathcal{S}^{n-1}) * \lvert \mathcal{E}_{\mathcal{S}^{n-1}} \rvert 
% <
% 2\tau * \lvert \mathcal{E} \rvert
% \]
Essentially, we have estimated an upper bound of the scalability measure for IBPM: $2\tau * \lvert \mathcal{E} \rvert$. The same upper bound can also be derived for the other case where $\mathcal{S}^{n}$ is not a single node. Comparing with Theorem~\ref{the:sm}'s result: $\pi(\mathcal{G}) * \lvert \mathcal{E} \rvert$, IBPM is more scalable than the standard propagation mechanism since the diameter of an interval is generally small (\ie around 2 on control flow graphs according to our large-scale evaluation). When the size of a graph increases, the size of an interval mostly stays as a constant, therefore IBPM outperforms the standard propagation by an even wider margin. Overall, using sensible approximations, we have proved IBPM is superior to the standard propagation in the scalability regard.

% propagation always happens at the level of intervals as the order graph increases/decreases, the propagation transitions from local to global naturally

% enables the nodes to be communicated thoroughly and efficiently across the entire graph, ultimately help GNN to converge.

% it recovers the lower order graph step by step. Worth mentioning the subsequent local propagation conducted at the level of intervals will be influenced by the global knowledge previously gained from the highest order of graph. 
% !TeX root = ../ggnn.tex

\section{Framework}\label{sec:AD}
This section presents \tool, a framework for creating various kinds of neural bug detectors. 
% In particular, we address two challenges: how to obtain sufficient training data and how to train a model that can precisely detect the presence of a bug.

\subsection{Overview of \tool}
\label{subsec:over}

\begin{figure}
\centering
  \includegraphics[width=0.95\linewidth]{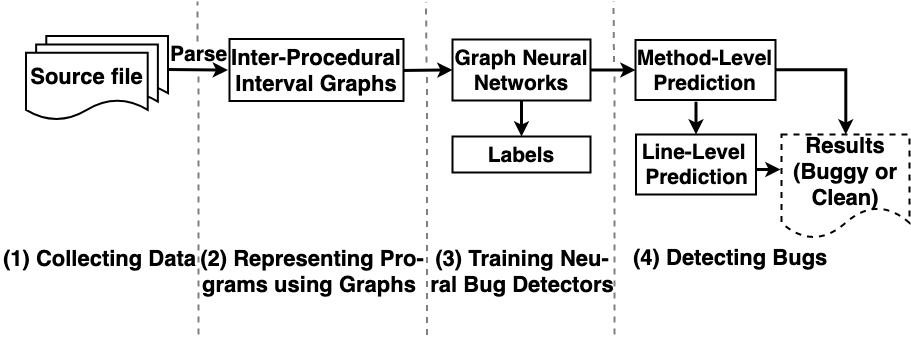}
  \caption{\tool's Workflow.}
  \label{fig:framework}
\end{figure}

Figure \ref{fig:framework} depicts the overview of \tool. We split \tool's workflow into four parts and describe each one below.
%It consists of four parts.

\begin{enumerate}[label*=(\arabic*)]
    \item \textbf{Collecting Data:} We extract programs from the codebases of real-world projects. We obtain bug labels by cross-referencing bug reports and the commit history of the same project. 
    
    \item \textbf{Representing Programs Using Graphs:} We construct the control flow graph for each method. To facilitate \tool to pinpoint a bug in a method, %    is constructed to a method-level control flow graph (CFG).
    we break each node denoting a basic block into multiple nodes each of which represents a non-control statement. We also stitch the control flow graphs of caller and callee to aid inter-procedural analysis.

    %In order to avoid merging statements with different semantics, each node in the graph is a statement, instead of a basic block. 
    %To the best of our knowledge, real-world methods can be very large. Therefore, to reduce the communication burden, each graph is partitioned into several intervals. The compact representation of method-level CFG is called method-level interval graph. Two kinds of edges are introduced to represent edges in intra-interval and inter-interval nodes. Nodes are represented as a list of syntax tokens.
    
    \item \textbf{Training Neural Bug Detectors:} We train neural bug detectors using graph neural networks. In particular, we propose a novel interval-based propagation mechanism that significantly enhances the capacity of GNN in learning programming patterns. 
    % address the generalization issues of GGNN on large graphs. The idea is to only perform propagation among nodes within the same interval. By moving from a lower order interval graph to a higher order interval graph, we implicitly expand the propagation to include more nodes in a graph, and therefore moving the propagation gradually towards the global level. To recover the embeddings of each node in the first order graph, we split an interval of higher order graph back to a multitude of nodes in the lower order graph until arriving at the original control flow graph. The process will be repeated until the model converges. Our intuition is to divide the propagation into two modes: local and global, between which a transition is realized to find the sweet spot. 
    %when a graph and its representations of nodes $\mathcal{M}$ are available, we perform the training process. The information propagation on interval graphs consists of three steps. First, for nodes inside each interval, we perform the classic information propagation. Then, when intra-interval propagation is finished, we conduct inter-interval propagation. The representation of each interval is computed by summing up the representations of all the nodes inside the interval. Finally, the representation of each node inside an interval is computed by averaging the representation of this interval. The combination of the three steps is an information propagation for the whole graph. We repeat it for $L$ times and take the node representation as final node representation.

    \item \textbf{Bug Detection:} Finally, we detect bugs in unseen code using the model we have trained.

\end{enumerate}

\subsection{Data Generation}
\label{subsec:data}
We use bug databases proposed by the prior works~\cite{just2014defects4j,saha2018bugs,tomassi2019bugswarm,ye2014learning} to generate our dataset. All databases provide a rich set of meta-information on each bug they contain (\eg detailed bug descriptions, links to actual commits of bug fixes, projects bug reports, \etc). We rely on such information to extract buggy methods and identify bug types. For example, given a bug description "\textit{Fixing Bugzilla\#2206...}", we search in the bug reports using the id \textit{\#2206} to retrieve the details of the bug, against which we match pre-defined keywords to determine the type of the bug. Next, we refer to the code commit to find out bug locations, specifically, we consider the lines that are modified by this commit to be buggy and the rest to be clean. We acknowledge the location of a bug and its fix may not be precisely the same thing. However, given the quality and maturity of the codebases we work with, our assumptions are reasonable. After extracting a buggy method, we choose from the same project several syntactically closest correct methods \wrt the tree edit distance between their abstract syntax trees.
 
% \ke{Given bug type information to correlate to intro} We extract methods from real world project corpus. To collect buggy code, we look for commits of bug fixes. For example, if a commit is described to fix a bug, then the lines that are modified will be considered as buggy whereas the rest are considered to be clean. We acknowledge the location of a bug and its fix may not be precisely the same thing. However, given the quality and maturity of the codebases we work with, our assumptions are reasonable. To classify the types of bugs, we cross-reference the bug reports and commit history of the same project. For example, given the format of a commit message "\textit{Fixing Bugzilla\#2206...}", we search in the bug reports using the id \textit{\#2206} to retrieve the details of the bug, against which we match pre-defined keywords to determine its type. To create a balanced dataset, for each buggy method we collect, we choose a syntactically closest correct method \wrt the tree edit distance between their abstract syntax trees.

\subsection{Representing Programs with Graphs}

We represent each method using its control flow graph. Besides, we split a graph node representing a basic block into a number of nodes, each of which represents a single statement. This variation enables the localization of bugs at the level of lines. To further enhance the expressiveness of our graph representation, we take the following measures: (1) we incorporate a new type of edges to represent the data dependency among variables. Specifically, we connect nodes with data dependency edges if the statements they represent exhibit a dependency relationship. Note that data dependency edges do not contribute to partitioning graphs into intervals. They only facilitate the local propagation within each interval. (2) we also add variable types into the mix. Given a variable $var$, we consider not only its actual type $\tau(var)$ but also all the supertypes of $\tau(var)$, denoted by $\tau^*(var) = \bigcup_{i=1}^{i=n} \tau^i(var)$, where $\tau^i(var) = \{\tau | \bigcup_{\omega \in \tau^{i-1}(var)} \omega$ implements $\tau\}$ with a base case $\tau^1(v) = \{\tau| \tau(var)$ implements type $\tau\}$. $\tau^n(var)$ signals the most general type (\eg \texttt{Object} in Java).

Due to the static nature of each non-control statement, we resort to recurrent neural networks---suitable for modeling sequentially data---to learn the initialization for each node in the graph. In particular, we treat a statement as a sequence of tokens. After each token is embedded into a numerical vector (similar to word embeddings~\cite{mikolov2013efficient}), we feed the entire token sequence into the network and extract its final hidden state as the initial node embedding. 

% If a method invokes other methods inside its body, 
We stitch the graphs of all callers (\ie methods invoking the target method) and callees (\ie methods invoked by the target method) to that of each target method to support inter-procedural analysis. In particular, we connect the node of method invocation inside of the caller to that of the entry point inside of the callee. Node labels obtained for each graph will be kept on the merged graph for neural bug detectors to predict. Worth mentioning our graph representation is independent of the type of the bug detector we train. In other words, a graph only presents the semantic representation of a program, it's the responsibility of a bug detector to learn the patterns \wrt the type of bugs it's interested in.

\subsection{Training and Testing Neural Bug Detectors}
Given the method representations we derived, we train a classifier using GNN with IBPM. In particular, different types of bug detectors will be trained and applied separately to catch bugs in the unseen code. Internally, each neural bug detector is trained in two steps. First, we train a feed-forward neural network~\cite{svozil1997introduction} on top of method embeddings computed from GNN (\ie aggregation of the node embeddings) to predict if a method is buggy or not. If it is, we will use the same network to predict the buggy lines. Our rationale is graph as a whole provides stronger signals for models to determine the correctness of a method. Correct methods will be refrained from further predictions, resulted in fewer false warnings to be emitted. On the other hand, if a method is indeed predicted to be buggy, our model will then pick the top $N$ statements according to their probabilities of being buggy. The benefit of this prediction mode is to provide flexibility for developers to tune their analysis toward either producing less false warnings or identifying more bugs. 

\section{Evaluation}\label{sec:evaluation}
This section presents an extensive evaluation of the three neural bug detectors we instantiate from \tool. First, we evaluate the performance of each neural bug detector. Next, we highlight the improved model capacity IBPM attributes to. Finally, we show how our neural bug detector fares against several static analysis tools in catching null pointer dereference bugs, followed by a summary of applying \tool to catch bugs in the wild.

\subsection{Implementation}\label{subsec:impl}
Due to the popularity and availability of Java datasets, we target semantic bugs in Java programs. For a different language, \tool only requires a front-end analyzer that extracts the control flow graphs for programs in the new language, the rest of the workflow is language-agnostic. We construct the control flow and interval graphs using Spoon~\cite{pawlak:hal-01169705}, an open-source library for analyzing Java source code.  To efficiently choose correct methods to pair with each buggy method, we run DECKARD~\cite{4222572}, a clone detection tool adopting an approximation of the shortest tree edit distance algorithm to identify similar code snippets. 

We realize our IBPM-based GNN based on GGNN's implementation~\cite{li2015gated}. All neural bug detectors are implemented in Tensorflow. All RNNs built in the model have 1 recurrent layer with 100 hidden units. Each token in the vocabulary is embedded into a 100-dimensional vector. We experimented with other model configurations (\eg more recurrent layers, hidden units, \etc) and did not obtain better results. We use random initialization for weight initialization. We train the neural bug detectors using the Adam optimizer~\cite{kingma2014adam}. All experiments are performed on a 3.7GHz i7-8700K machine with 32GB RAM and NVIDIA GTX 1080 GPU.

\begin{table*}[h]
\caption{Details about our dataset. NPE, AIE, and CCE stand for null pointer dereference, array index out of bound and class cast exceptions respectively. Numbers in parenthesis records the number of synthetic bugs we created for AIE and CCE.}
\adjustbox{max width=.8\paperwidth}{
\centering
\small
\begin{tabular}{c|c|c|c|ccc}
\hline
 \multirow{2}{*}{Dataset} & \multirow{2}{*}{Projects} & \multirow{2}{*}{Description}& \multirow{2}{*}{Size (KLoC)}&  \multicolumn{3}{c}{Number of Buggy Methods} \\
\cline{5-7}
& & & &\tabincell{c}{NPE} & \tabincell{c}{AIE} & \tabincell{c}{\tabincell{c}{CCE}} \\\hline
\multirow{13}{*}{Test} &Lang & Java lang library &50 & 7 & 5 & 2  \\\cline{2-7}
&Closure & A JavaScript checker and optimizer. &260 & 18 & 0 & 0  \\\cline{2-7}
&Chart & Java chart library &149 & 17 & 7 & 0  \\\cline{2-7}
&Mockito & Mocking framework for unit tests &45 & 14 & 5 & 23  \\\cline{2-7}
&Math & Mathematics and statistics components &165 & 16 & 17 & 3  \\\cline{2-7}
&Accumulo & Key/value store &194 & 6 & 0 & 0  \\\cline{2-7}
&Camel & Enterprise integration framework &560 & 17 & 1 & 1  \\\cline{2-7}
&Flink & System for data analytics in clusters &258 & 13 & 0 & 0  \\\cline{2-7}
&Jackrabbit-oak & hierarchical content repository &337 & 23 & 1 & 1  \\\cline{2-7}
&Log4j2 & Logging library for Java &70 & 29 & 1 & 2  \\\cline{2-7}
&Maven & Project management and comprehension tool & 62 & 6 & 0 & 0  \\\cline{2-7}
&Wicket & Web application framework &206 & 7 & 1 & 8  \\\cline{2-7}
&Birt & Data visualizations platform &1,093 & 678 & 129 & 154  \\\hline
\hline
\multicolumn{4}{r|}{Number of Buggy Methods in Total for Test}& 793 & 167 & 192  \\ % (num) is the number of buggy methods
\multicolumn{4}{r|}{Number of Methods in Total for Test}& 3,000 & 600 & 800  \\\hline % (num) is the number of buggy methods
\hline
\multirow{6}{*}{Training}
&JDT UI & User interface for the Java IDE & 508 & 897 & 63 & 156 \\\cline{2-7}
&SWT & Eclipse Platform project repository & 460 & 276 & 136 & 26  \\\cline{2-7}
&Platform UI & User interface and help components of Eclipse & 595 & 920 & 72 & 194  \\\cline{2-7}
&AspectJ & An aspect-oriented programming extension &289 & 151 & 14 &  19 \\\cline{2-7}
&Tomcat & Web server and servlet container &222 & 111 & 18 & 28  \\\cline{2-7}
&from BugSwarm & 61 projects on GitHub & 5,191 & 156 & 20 & 3 \\\cline{2-7}
&from BugSwarm & 250 projects on GitHub & 19,874 & (0) & (2,356) & (1,998) \\\hline
\hline
% \multicolumn{4}{r|}{Total}& 2,511 & 323 & 426  \\\hline
\multicolumn{4}{r|}{Number of Buggy Methods in Total for Training}& 2,511 & 2,679 & 2,424  \\
\multicolumn{4}{r|}{Number of Methods in Total for Training}& 10,000 & 10,000 & 10,000  \\\hline
\end{tabular}
}
\label{tab:stas}
\end{table*}

\subsection{Experiment Setup}
We assemble a dataset according to the procedure described in Section~\ref{subsec:data}.
% To create a dataset for our evaluation, we refer to defect4j~\cite{just2014defects4j}, bugs.jar~\cite{saha2018bugs}, BugSwarm~\cite{tomassi2019bugswarm} and the other by~\citet{ye2014learning}, four real-world, publicly accessible bug databases. All databases provide a rich set of information pertaining to each bug they contain (\eg commits made by developers that introduce or fix bugs, bug reports, \etc). We rely on such
% information to extract buggy methods and identify the bug type. We run DECKARD to collect clean methods in the same project from which a buggy method is extracted. 
Table~\ref{tab:stas} shows the details: the projects from which we extract the code snippet, a brief description of their functionality, size of their codebase, and the number of buggy methods we extract from each project. We maintain approximately 3:1 ratio between the correct and buggy methods across each bug type within each project.
To enrich the set of AIE (Array Index Out of Bound Exception) and CCE (Class Cast Exception) bugs, we create synthetic buggy examples based on a collection of highly-rated and actively maintained projects from BugSwarm~\cite{tomassi2019bugswarm}. In particular, we first randomly pick methods that contain either array access or class cast operations, and then strip away the bounds or type checks. The number in parenthesis depicts how many buggy examples we synthesized for AIE and CCE. Note that all synthetic bugs are used for the training purpose only, and all bugs in the test set are real. We also make certain clean methods that are paired with synthetic buggy examples are chosen from other methods rather than their original version (\ie with the bounds or type checks). We evaluate \tool on a standard classification task, that is, predicting each line in a method in the test set to be buggy or not. Since we deal with a largely unbalanced dataset, we forgo the accuracy metric and opt for Precision, Recall, and F1 Score, metrics are commonly used in defect prediction literature~\cite{wang2016automatically,pradel2018deepbugs}.

As a baseline, we select standard GGNN~\cite{li2015gated,allamanis2017learning}, a predominate deep learning model in the domain of programming. In particular, we re-implemented the program encoding scheme proposed by~\citet{allamanis2017learning}, a graph representation built of top of the AST with additional edges denoting variable types, data, control flow dependencies, \etc

\begin{table*}
\begin{minipage}{.33\textwidth}
\captionsetup{skip=1pt}
\caption{Precision in predicting NPE.}
\centering
\adjustbox{max width=1\textwidth}{
\begin{tabular}{c|c|c|c}
\hline
 \tabincell{c}{Methods} & Top-1 & Top-3 & Top-5  \\\hline
Baseline (0) & 0.209 & 0.147 & 0.117 \\\hline
Baseline (1) & \textit{0.285} & 0.166 & 0.131 \\\hline
Baseline (2) & 0.218 & 0.138 & 0.101 \\\hline
\tool (0) & 0.326 & 0.174 & 0.138 \\\hline
\tool (1) & \textbf{0.351} & 0.167 & 0.130 \\\hline
\tool (2) & 0.305 & 0.164 & 0.136 \\\hline
\hline
Gain & \multicolumn{3}{|l}{\hspace{1pt}\textbf{0.351} - \textit{0.285} = 0.066} \\\hline
\end{tabular}}
\label{tab:preNPE}
\end{minipage}
\begin{minipage}{.33\textwidth}
\captionsetup{skip=1pt}
\caption{Recall in predicting NPE.}
\centering
\adjustbox{max width=1\textwidth}{
\begin{tabular}{c|c|c|c}
\hline
 \tabincell{c}{Methods} & Top-1 & Top-3 & Top-5  \\\hline
Baseline (0) & 0.243 & 0.373 & \textit{0.450} \\\hline
Baseline (1) & 0.171 & 0.252 & 0.302 \\\hline
Baseline (2) & 0.210 & 0.337 & 0.375 \\\hline
\tool (0) & 0.282 & 0.374 & 0.447 \\\hline
\tool (1) & 0.329 & 0.431 & \textbf{0.507} \\\hline
\tool (2) & 0.304 & 0.402 & 0.500 \\\hline
\hline
Gain & \multicolumn{3}{|l}{\hspace{1pt}\textbf{0.507} - \textit{0.450} = 0.057} \\\hline
\end{tabular}}
\label{tab:reNPE}
\end{minipage}
\begin{minipage}{.33\textwidth}
\captionsetup{skip=1pt}
\caption{F1 score in predicting NPE.}
\centering
\adjustbox{max width=1\textwidth}{
\begin{tabular}{c|c|c|c}
\hline
 \tabincell{c}{Methods} & Top-1 & Top-3 & Top-5  \\\hline
Baseline (0) & \textit{0.224} & 0.211 & 0.186 \\\hline
Baseline (1) & 0.214 & 0.201 & 0.183 \\\hline
Baseline (2) & 0.214 & 0.196 & 0.159 \\\hline
\tool (0) & 0.303 & 0.238 & 0.211 \\\hline
\tool (1) & \textbf{0.339} & 0.241 & 0.207 \\\hline
\tool (2) & 0.304 & 0.233 & 0.214 \\\hline
\hline
Gain & \multicolumn{3}{|l}{\hspace{1pt}\textbf{0.339} - \textit{0.224} = 0.115} \\\hline
\end{tabular}}
\label{tab:f1NPE}
\end{minipage}

\par\bigskip

\begin{minipage}{.33\textwidth}
\captionsetup{skip=1pt}
\caption{Precision in predicting AIE.}
\centering
\adjustbox{max width=1\textwidth}{
\begin{tabular}{c|c|c|c}
\hline
 \tabincell{c}{Methods} & Top-1 & Top-3 & Top-5  \\\hline
Baseline (0) & \textit{0.286} & 0.124 & 0.106 \\\hline
Baseline (1) & 0.218 & 0.112 & 0.095 \\\hline
Baseline (2) & 0.215 & 0.134 & 0.088 \\\hline
\tool (0) & 0.269 & 0.159 & 0.113 \\\hline
\tool (1) & 0.308 & 0.134 & 0.097 \\\hline
\tool (2) & \textbf{0.362} & 0.131 & 0.110 \\\hline
\hline
Gain & \multicolumn{3}{|l}{\hspace{1pt}\textbf{0.362} - \textit{0.286} = 0.076} \\\hline
\end{tabular}}
\label{tab:preAI}
\end{minipage}\hfill
\begin{minipage}{.33\textwidth}
\captionsetup{skip=1pt}
\caption{Recall in predicting AIE}
\centering
\adjustbox{max width=1\textwidth}{
\begin{tabular}{c|c|c|c}
\hline
 \tabincell{c}{Methods} & Top-1 & Top-3 & Top-5  \\\hline
Baseline (0) & 0.231 & 0.266 & \textit{0.355} \\\hline
Baseline (1) & 0.165 & 0.225 & 0.300 \\\hline
Baseline (2) & 0.180 & 0.288 & 0.306 \\\hline
\tool (0) & 0.284 & 0.425 & 0.470 \\\hline
\tool (1) & 0.291 & 0.358 & 0.403 \\\hline
\tool (2) & 0.333 & 0.378 & \textbf{0.489} \\\hline
\hline
Gain & \multicolumn{3}{|l}{\hspace{1pt}\textbf{0.489} - \textit{0.355} = 0.134} \\\hline
\end{tabular}}
\label{tab:reAI}
\end{minipage}\hfill
\begin{minipage}{.33\textwidth}
\captionsetup{skip=1pt}
\caption{F1 score in predicting AIE.}
\centering
\adjustbox{max width=1\textwidth}{
\begin{tabular}{c|c|c|c}
\hline
 \tabincell{c}{Methods} & Top-1 & Top-3 & Top-5  \\\hline
Baseline (0) & \textit{0.256} & 0.169 & 0.163 \\\hline
Baseline (1) & 0.188 & 0.149 & 0.145 \\\hline
Baseline (2) & 0.196 & 0.183 & 0.136 \\\hline
\tool (0) & 0.276 & 0.232 & 0.182 \\\hline
\tool (1) & 0.299 & 0.195 & 0.157 \\\hline
\tool (2) & \textbf{0.347} & 0.195 & 0.179 \\\hline
\hline
Gain & \multicolumn{3}{|l}{\hspace{1pt}\textbf{0.347} - \textit{0.256} = 0.091} \\\hline
\end{tabular}}
\label{tab:f1AI}
\end{minipage}\hfill

\par\bigskip

\begin{minipage}{.33\textwidth}
\captionsetup{skip=1pt}
\caption{Precision in predicting CCE.}
\centering
\adjustbox{max width=1\textwidth}{
\begin{tabular}{c|c|c|c}
\hline
 \tabincell{c}{Methods} & Top-1 & Top-3 & Top-5  \\\hline
Baseline (0) & 0.345 & 0.165 & 0.169 \\\hline
Baseline (1) & 0.328 & 0.180 & 0.135 \\\hline
Baseline (2) & \textit{0.352} & 0.198 & 0.173 \\\hline
\tool (0) & 0.362 & 0.220 & 0.170 \\\hline
\tool (1) & \textbf{0.438} & 0.244 & 0.199 \\\hline
\tool (2) & \textbf{0.438} & 0.252 & 0.193 \\\hline
\hline
Gain & \multicolumn{3}{|l}{\hspace{1pt}\textbf{0.438} - \textit{0.387} = 0.051} \\\hline
\end{tabular}}
\label{tab:preCC}
\end{minipage}\hfill
\begin{minipage}{.33\textwidth}
\captionsetup{skip=1pt}
\caption{Recall in predicting CCE.}
\centering
\adjustbox{max width=1\textwidth}{
\begin{tabular}{c|c|c|c}
\hline
 \tabincell{c}{Methods} & Top-1 & Top-3 & Top-5  \\\hline
Baseline (0) & 0.270 & 0.319 & 0.467 \\\hline
Baseline (1) & 0.319 & 0.369 & 0.393 \\\hline
Baseline (2) & 0.291 & 0.371 & \textit{0.503} \\\hline
\tool (0) & 0.286 & 0.443 & 0.495 \\\hline
\tool (1) & 0.292 & 0.413 & 0.486 \\\hline
\tool (2) & 0.384 & 0.561 & \textbf{0.620} \\\hline
\hline
Gain & \multicolumn{3}{|l}{\hspace{1pt}\textbf{0.620} - \textit{0.503} = 0.117} \\\hline
\end{tabular}}
\label{tab:reCC}
\end{minipage}\hfill
\begin{minipage}{.33\textwidth}
\captionsetup{skip=1pt}
\caption{F1 score in predicting CCE.}
\centering
\adjustbox{max width=1\textwidth}{
\begin{tabular}{c|c|c|c}
\hline
 \tabincell{c}{Methods} & Top-1 & Top-3 & Top-5  \\\hline
Baseline (0) & 0.303 & 0.217 & 0.249 \\\hline
Baseline (1) & \textit{0.323} & 0.242 & 0.201 \\\hline
Baseline (2) & 0.319 & 0.258 & 0.257 \\\hline
\tool (0) & 0.320 & 0.294 & 0.254 \\\hline
\tool (1) & 0.350 & 0.307 & 0.282 \\\hline
\tool (2) & \textbf{0.409} & 0.348 & 0.295 \\\hline
\hline
Gain & \multicolumn{3}{|l}{\hspace{1pt}\textbf{0.409} - \textit{0.323} = 0.086} \\\hline
\end{tabular}}
\label{tab:f1CC}
\end{minipage}\hfill
\end{table*}

\ignore{
\begin{table*}
\begin{minipage}{.33\textwidth}
\captionsetup{skip=1pt}
\caption{Precision in predicting all bugs.}
\centering
\adjustbox{max width=1\textwidth}{
\begin{tabular}{c|c|c|c}
\hline
 \tabincell{c}{\#Interval} & Top-1 & Top-3 & Top-5  \\\hline
1-2 & 51.1/40.2 & 29.7/23.2 & 24.0/18.1 \\\hline
3-4 & 10.0/23.1 & 8.9/12.3 & 8.7/9.2 \\\hline
5-6 & 4.3/0.0 & 2.9/6.0 & 1.7/4.8 \\\hline
>6 & 8.3/0.0 & 2.8/0.0 & 3.3/0.7 \\\hline
\end{tabular}}
\label{tab:cfgpre}
\end{minipage}
\begin{minipage}{.33\textwidth}
\captionsetup{skip=1pt}
\caption{Recall in predicting all bugs.}
\centering
\adjustbox{max width=1\textwidth}{
\begin{tabular}{c|c|c|c}
\hline
 \tabincell{c}{\#Interval} & Top-1 & Top-3 & Top-5  \\\hline
1-2 & 30.2/37.1 & 40.4/49.9 & 46.7/55.9 \\\hline
3-4 & 9.7/21.4 & 25.8/34.3 & 41.9/42.9 \\\hline
5-6 & 4.5/0.0 & 9.1/26.0 & 9.1/34.6 \\\hline
>6 & 8.3/0.0 & 8.3/0.0 & 16.7/3.3 \\\hline
\end{tabular}}
\label{tab:cfgre}
\end{minipage}
\begin{minipage}{.33\textwidth}
\captionsetup{skip=1pt}
\caption{F1 score in predicting all bugs.}
\centering
\adjustbox{max width=1\textwidth}{
\begin{tabular}{c|c|c|c}
\hline
 \tabincell{c}{\#Interval} & Top-1 & Top-3 & Top-5  \\\hline
1-2 & 37.9/38.6 & 34.2/31.7 & 31.7/27.4 \\\hline
3-4 & 9.8/22.2 & 13.2/18.1 & 14.4/15.2 \\\hline
5-6 & 4.4/NA & 4.4/9.7 & 2.9/8.4 \\\hline
>6 & 8.3/NA & 4.2/NA & 5.6/1.1 \\\hline
\end{tabular}}
\label{tab:cfgf1}
\end{minipage}

\par\bigskip

\end{table*}
}

\subsection{Performance Measurement}
We build two sets of neural bug detectors based on \tool and the baseline. Each set includes three models, each of which deals with null pointer dereference, array index out of bound and class casting bugs respectively. 

\paragraph{Evaluating models using proposal metrics} Table~\ref{tab:preNPE}--\ref{tab:f1CC} depicts the precision, recall, and F1 for all neural bug detectors. We set out to examine the impact of the depth of inlining on the performance of each bug detector. In particular, each row in a table shows the result of a neural bug detector that is trained either using the baseline model or \tool. The number in parenthesis is the depth of inlining. Specifically, 0 means no inlining, 1 inlines all methods that are invoking or invoked by the target methods, and 2 further inlines all callers and callees of each method that is inlined by 1. Exceeding the depth of 2 has costly consequences. First, most data points will have to be filtered out for overflowing the GPU memory. Moreover, a significant portion of the remaining ones would also have to be placed in a batch on its own, resulted in a dramatic increase in training time. Methods in the test set are inlined as deep as those in the training set to maintain a consistent distribution across the entire dataset. We also make certain each buggy method is only inlined once to prevent a potential date duplicate issue.
% Note that with inlining we will inevitably introduce duplicate data points. For example, if method A calls method B, which happens to be buggy, then inlining method A by depth 1 will receive the same bug labels as inlining method B by depth 1 providing that no other caller or callee of either method A or B is buggy. In other words, we may potentially recount the predictions made on the identical buggy examples. 
% % We remove the data points which include buggy methods that are already covered by 
% Concerned with the validity of our experiments, we make certain none of the buggy methods is shared among multiple data points in the test set. 
Columns in each table correspond to three prediction modes in which \tool picks 1, 3 or 5 statements to be buggy after a method is predicted to be buggy.

Overall, \tool consistently outperforms the baseline across all bug types using all proposed metrics. In particular, \tool beats the baseline by around 10\% across all bugs types in F1 score. 
% For null pointer dereference, \tool achieves far better results in all three metrics across all prediction configurations. For array index out of bound and class cast exceptions, \tool is still considerably better than the baseline despite performing marginally worse on occasion. 
Regarding the impact of the depth of inlining, we find that increasing the depth mostly but not always leads to improved performance of either \tool or the baseline. The reason is, on one hand, models will benefit from a more complete program representation thanks to the inlining. On the other hand, exceedingly large graphs hinders the generalization of GNNs, resulted in degraded performance of the neural bug detectors.

\paragraph{Investigating models scalability}
We divide the entire test set into ten subsets, each of which consists of graphs of similar size. We then investigate how models perform on each subset starting from the smallest to the largest graphs. We fix the depth of inlining to be 2, which provides the largest variation among all graphs in terms of the size. Figure~\ref{fig:scaf} depicts the F1 score for \tool and the baseline under top-3 prediction mode (top-1 and top-5 yield similar results). Initially, \tool achieves slightly better results on the subset of small graphs; later as the size of graphs increases, \tool is significantly more resilient with its prediction than the baseline. Overall, we conclude bug detectors built out of \tool are more scalable.

\begin{figure*}
\begin{minipage}{.33\textwidth}
% \captionsetup{skip=1pt}
% \caption{Precision in predicting NPE.}
\centering
\begin{tikzpicture}[scale=0.92]
   \begin{axis}[
    xmax=100,xmin=0,
    ymin=0,ymax=0.7,
    xlabel=\emph{Graph Size},ylabel=\emph{Top-3 F1 Score},
    xtick={0,20,40,...,100},
    ytick={0,0.1,0.20,...,0.7},
    xticklabel={\pgfmathparse{\tick}\pgfmathprintnumber{\pgfmathresult}\%},
    ]
     \addplot+ [mark=pentagon*,mark size=2.5pt,every mark/.append style={}] coordinates{(5, 0.481) (15, 0.385) (25, 0.288) (35, 0.256) (45, 0.256) (55, 0.163) (65, 0.156) (75, 0.127) (85, 0.070) (95, 0.027)};
     \addplot coordinates{(5, 0.388) (15, 0.294) (25, 0.350) (35, 0.309) (45, 0.303) (55, 0.266) (65, 0.275) (75, 0.244) (85, 0.149) (95, 0.073)};   
    \legend{\emph{Baseline},\emph{\tool}}
    \end{axis}
    \end{tikzpicture}
     ~%
    % %
    \subcaption{}
\label{fig:scaf}
\end{minipage}
\begin{minipage}{.33\textwidth}
% \captionsetup{skip=1pt}
% \caption{F1 score in predicting NPE.}
\centering
\begin{tikzpicture}[scale=0.92]
   \begin{axis}[
    xmax=120,xmin=0,
       ymin=0,ymax=0.4,
    xlabel=\emph{Training Time (Minutes)},ylabel=\emph{Average Test F1 for NPE},
    xtick={0,30,60,90,120},
    ytick={0,0.1,0.20,...,0.4},
    xticklabel={\pgfmathparse{\tick}\pgfmathprintnumber{\pgfmathresult}},
   ]
   
    %  \addplot+ [mark=pentagon*,mark size=2.5pt,every mark/.append style={}] coordinates{(0, 0.177) (7.5, 0.171) (15, 0.165) (22.5, 0.165) (30, 0.165) (37.5, 0.141) (45, 0.193) (52.5, 0.168) (60, 0.174) (67.5, 0.168) (75, 0.153) (82.5, 0.193) (90, 0.186) (97.5, 0.180) (105, 0.156) (112.5, 0.186) (120, 0.165) (127.5, 0.162) (135, 0.199) (142.5, 0.165) (150, 0.171)};
    %  \addplot coordinates{(0, 0.245) (7.5, 0.253) (15, 0.270) (22.5, 0.291) (30, 0.304) (37.5, 0.279) (45, 0.287) (52.5, 0.300) (60, 0.291) (67.5, 0.291) (75, 0.283) (82.5, 0.274) (90, 0.296) (97.5, 0.274) (105, 0.287) (112.5, 0.287) (120, 0.287) (127.5, 0.287) (135, 0.291) (142.5, 0.287) (150, 0.283)};
     \addplot+ [mark=pentagon*,mark size=2.5pt,every mark/.append style={}] coordinates{(0, 0.1)  (15, 0.123)  (30, 0.129) (45, 0.137)  (60, 0.169)  (75, 0.167)  (90, 0.174)  (105, 0.19)  (120, 0.191)  (135, 0.239)  (150, 0.24)};
     \addplot coordinates{(0, 0.117)  (15, 0.185)  (30, 0.204)  (45, 0.232) (60, 0.252)  (75, 0.255)  (90, 0.255)  (105, 0.256)  (120, 0.255) (135, 0.412) (150, 0.41)};
     \legend{\emph{Baseline}, \emph{\tool}}
    \end{axis}
    \end{tikzpicture}%
    ~%
\subcaption{}
\label{fig:ttf}
\end{minipage}
\begin{minipage}{.33\textwidth}
\vspace{4pt}
% \captionsetup{skip=1pt}
% \caption{Recall in predicting NPE.}
\centering
\begin{tikzpicture}[scale=0.92]
  \begin{axis}[
  % x tick label style={/pgf/number format/1000 sep=},
    ymin=0,ymax=0.7,
    xlabel=\emph{Number of First Order Intervals}, ylabel=\emph{Top-3 F1 Score},
    xtick=data,
    xticklabels={1,2,...,6,7+},
    ybar,
    bar width=6pt,
    ytick pos=left,
    xtick style={draw=none},
    xticklabel style={yshift=.75ex},
    legend image code/.code={
        \draw [#1] (0cm,-0.1cm) rectangle (0.2cm,0.25cm); },
  ]

  \addplot [black!30!white,fill=black!30!white] coordinates {
  (1, 0.257) (2, 0.215) (3, 0.183) (4, 0.173) (5, 0.116) (6, 0.115) (7, 0.074)
  };
  \addplot [black!100!white,fill=black!100!white] coordinates {
  (1, 0.352) (2, 0.288) (3, 0.259) (4, 0.291) (5, 0.102) (6, 0.159) (7, 0.107)
  };
  \legend{\emph{w/o IBPM},\emph{\tool}}
  \end{axis}
\end{tikzpicture}%
    ~%
\subcaption{}
% \caption{ss}
\label{fig:intf}
\end{minipage}
\caption{Experiment results.}
\end{figure*}

\paragraph{Measuring training efficiency}
Since all models make instantaneous predictions (\eg around 800 examples per second), we only measure the training speed of \tool and the baseline. Again, we fix the depth of inlining to be 2, which presents the biggest challenges in training efficiency. Figure~\ref{fig:ttf} shows the \textit{test} F1 score (average across top-1, 3 and 5 prediction modes) over training time for \tool and the baseline on null pointer dereference bugs (Other two bug types are much easier to train). \tool's bug detector achieves results that are as 50\% as good as its final results under half an hour, and results that are as 95\% as good under 1 hour, while both being substantially higher than the best results of the baseline models. Our model achieves its best results around the 1 hour mark.

\subsection{Contribution of IBPM}
We aim to precisely quantify IBPM's contribution to the learning capacity of the GNN. In particular, we re-implement \tool by replacing its IBPM with the standard propagation mechanism. In other words, we perform a head-to-head comparison between the two propagation mechanisms while the rest of the model architecture is held the same. Figure~\ref{fig:intf} shows how the test F1 scores of both configurations vary with the number of first order intervals in top-3 prediction mode (top-1 and top-5 yield similar results). For the same reason, the depth of inlining is fixed to be 2. Given the same number of the first order intervals, IBPM performs better than the standard propagation mechanism across the board. Apart from the apparent scalability advantage IBPM displays over the standard propagation mechanism, IBPM achieves better results even when the number of intervals is small, thanks to its exploitation of the program-specific graph characteristics.

\subsection{Futher Evaluation of \tool}
In general, static analyzers can be classified into two categories: soundy and unsound. The former refers to tools that are sound by design but make unsound choices for utility concerns, whereas the latter is often built out of pattern matching techniques that do not concern themselves with soundness. Intending to cover both flavors, we pick soundy tools: 
Facebook Infer~\cite{calcagno2015moving,berdine2005smallfoot}, Pinpoint~\cite{shi2018pinpoint} and BugPicker, and another unsound tool: SpotBugs (the successor of FindBugs~\cite{Hovemeyer2004}).

We emphasize that our work makes no claims of advancing the theory of sound static analysis, therefore this experiment solely focuses on studying the utility of each tool in practice. 
% In other words, we present empirical evidence to demonstrate \tool's strength compared with other static analysis tools. 
We examine the performance of each checker in catching null pointer dereference bugs, which happen to be the only type of bugs all tools support.

% This section compares \tool against traditional 
% static analyzers. As a conceptual advantage, we re-emphasize 
% that creating neural bug finders based on \tool does not require the substantial amount of human-expertise that is needed to build 
% classical static analyzers. Sufficient amount of training data is 
% all \tool needs. Besides, \tool also allows an easier extension to 
% support new types of bugs compared to static analyzers. 

% To compare the actual performance between \tool and static 
% analyzers in catching real world, complex and semantic bugs, 
% we pick Facebook Infer~\cite{calcagno2015moving}, arguably the state-of-the-art static analyzer supporting Java program language. 

% We compare \tool against and Infer in catching null pointer dereference bugs, which happen to be the only type of bugs both \tool and Infer support.

% Ideally, we would like to use all bugs in our test set for the experiment, however, doing so would require a substantial amount of manual work. As for each bug in the test set, we need to pull from the repository the right version of the source code at which the bug has a presence, and subsequently analyze the report produced by each static analysis tool. 

To keep the engineering load manageable, we randomly pick 75 bugs out of seven projects from defect4j and bugs.jar, two most studied datasets in bug detection literature. Later we discover Infer does not have access to the build tools that are required to compile some of the projects in defect4j. Therefore, we show Infer's performance only on the projects it can compile. Those projects contain 58 bugs in total, namely Bug Set I.

First, we manually verify the validity, and more importantly, the location for all 75 bugs in case any bug may trigger an exception at a different location than developers' fix.
% to prevent any tool from being unfairly disadvantaged. For example, if a developer's fix does not exactly correlate to the program location where exceptions are thrown, we will manually add another buggy label to the location in addition to the point of fix committed by developers. In the end, we annotated six more buggy locations which adds up to 81 in total. 
Next, we feed the entire codebase of each project for each checker to scan. A report that points to a buggy location will be counted as a true positive, otherwise, it's a false positive. We acknowledge that a static analysis tool could have caught a real bug that is mistakenly counted as a false positive since it falls out of the 75 bugs we collected, however, the exact same issue applies to \tool, which we regrettably ignore due to resource constraint. We use \tool with depth 2 of inlining trained in the previous experiment. It is configured to run the top-1 prediction mode. We choose the most precise analysis for Pinpoint to perform. We also manually fine-tune BugPicker (\eg increase the length of the call chain or reset the timeout), but obtain the same results. We do not manage to run any other configuration for Infer and SpotBugs rather than their default set up. 

Table~\ref{tab:sats} depicts the results. Overall, we find (1) \tool achieves better precision and recall than any other evaluated tool; more importantly, (2) bugs that are found by competing tools are all caught by \tool. Besides, \tool detects another 11 bugs that are missed by the other tools.

\begin{table}[!t]
\centering
\normalsize
\caption{Results of the comparison. TP, FP, FN stand for true positives, false positives and false negatives respectively.}
\adjustbox{max width=\columnwidth}{\begin{tabular}{c|ccc|ccc|cc}
\hline
 \multirow{2}{*}{\tabincell{c}{Tools}} & \multicolumn{3}{c|}{Bug Set I} & \multicolumn{3}{c|}{Remaining Set} & \multicolumn{2}{c}{Total}\\
\cline{2-9}
         & TP & FP & FN & TP & FP & FN  & Precision & Recall   \\\hline
FB Infer & 4 & 1004 & 54 & NA & NA & NA & .004 & .069    \\\hline
Pinpoint & 4 & 1524 & 54 & 3 & 1104 & 14 & .003 & .093   \\\hline
SpotBugs & 0 & 107 & 58 & 1 & 26 & 16 & .007 & .013   \\\hline
BugPicker & 0 & 79 & 58 & 0 & 0 & 17 & 0 & 0   \\\hline
\hline
\textbf{\tool} & \textbf{16} & \textbf{1655} & \textbf{42} & \textbf{5} & \textbf{521} & \textbf{12} & \textbf{.010} & \textbf{.280}  \\\hline %top-1
% \textbf{\tool} & \textbf{22} & \textbf{4414} & \textbf{33} & \textbf{8} & \textbf{1579} & \textbf{13} & \textbf{.005} & \textbf{.379}  \\\hline %top-3
%\textbf{\tool} & \textbf{25} & \textbf{136} & \textbf{33} & \textbf{6} & \textbf{34} & \textbf{11} & \textbf{.154} & \textbf{.413}  \\\hline % FP should *3?

\end{tabular}}
\label{tab:sats}
\end{table}

We have also applied \tool to catch bugs in the wild, and obtained encouraging results. Configured with top-1 prediction mode and inline depth of 2 (we also tried 3 and 4 for smaller methods), \tool found \rBugs bugs (40 NPE and 10 AIE) in 17 projects that are among the 
most starred on GitHub. All bugs are manually verified and reported to developers, out of which \fBugs are fixed and another \cBugs are confirmed (fixes pending). Our findings show that not only does \tool significantly improve the prior works, but more importantly, \tool is a industry-strength static tool in checking real-world Java projects.

\subsection{Discussion on Limitation of \tool}
In order to gain a deeper understanding of \tool's limitation, we mainly look into the bugs \tool missed in the prior experiment. We find that when bug patterns are not local, that is, the dereference is quite distant from where a null pointer is assigned, \tool tends to struggle, in particular, it even misclassifies the method to be clean most of the time. The reason is, as one can imagine, \tool's signal gets diluted when a scattered bug pattern is overwhelmed by irrelevant statements in a program. We plan to tackle this issue with both program analysis and machine learning techniques. For data pre-processing, program slicing~\cite{Weiser1981} seems to be a promising idea. Given the location of a potential null pointer dereference (due to the null assignment or non-initialization), we can filter out statements that fall out of the slice to localize a bug pattern. On the other hand, a larger, more diverse dataset that contains a variety of bug patterns should also help, or for data augmentation, we can synthesize new programs by randomly adding dummy statements in between the assignment and dereference of null pointer for buggy methods, especially when the high-quality, large-scale public datasets are unavailable.

Another issue we discovered is \tool is less accurate when handling large programs (\ie approximately 1,000 nodes in the graph) despite the contribution IBPM makes to the underlying GNN. We find several bugs \tool missed where the bug patterns are in fact local. We believe this is due to the fundamental limitation of GNN, which requires deeper insight and understanding of the models themselves, without which, IBPM's results will be hard to improve upon in a qualitative manner.

We also briefly examine the false positives \tool emitted. For a few dozen we inspected, we find that the same issue causing \tool to miss bugs are also major sources of \tool's imprecision. In addition, \tool can be over-sensitive about the presence of null pointers, albeit we can not confirm they are never dereferenced.

\ignore{

\begin{figure*}
\centering
\begin{tikzpicture}[scale=0.92]
  \begin{axis}[
  % x tick label style={/pgf/number format/1000 sep=},
    ymin=0,ymax=0.7,
    xlabel=\emph{Number of First Order Intervals}, ylabel=\emph{Top-1 F1 Score},
    xtick=data,
    xticklabels={1,2,...,6,7+},
    ybar,
    bar width=6pt,
    legend image code/.code={
        \draw [#1] (0cm,-0.1cm) rectangle (0.2cm,0.25cm); },
  ]

  \addplot [black!30!white,fill=black!30!white] coordinates {
  (1, 0.297) (2, 0.220) (3, 0.202) (4, 0.213) (5, 0.058) (6, 0.095) (7, 0.044) 
  };
  \addplot [black!100!white,fill=black!100!white] coordinates {
  (1, 0.358) (2, 0.310) (3, 0.233) (4, 0.397) (5, 0.076) (6, 0.154) (7, 0.213)
  };
  \legend{\emph{w/o IBPM},\emph{\tool}}
  \end{axis}
\end{tikzpicture}%
    ~%
\begin{tikzpicture}[scale=0.92]
  \begin{axis}[
  % x tick label style={/pgf/number format/1000 sep=},
    ymin=0,ymax=0.7,
    xlabel=\emph{Number of First Order Intervals}, ylabel=\emph{Top-3 F1 Score},
    xtick=data,
    xticklabels={1,2,...,6,7+},
    ybar,
    bar width=6pt,
    legend image code/.code={
        \draw [#1] (0cm,-0.1cm) rectangle (0.2cm,0.25cm); },
  ]

  \addplot [black!30!white,fill=black!30!white] coordinates {
  (1, 0.257) (2, 0.215) (3, 0.183) (4, 0.173) (5, 0.116) (6, 0.115) (7, 0.074) 
  };
  \addplot [black!100!white,fill=black!100!white] coordinates {
  (1, 0.352) (2, 0.288) (3, 0.259) (4, 0.291) (5, 0.102) (6, 0.159) (7, 0.107)
  };
  \legend{\emph{w/o IBPM},\emph{\tool}}
  \end{axis}
\end{tikzpicture}%
    ~%
\begin{tikzpicture}[scale=0.92]
  \begin{axis}[
  % x tick label style={/pgf/number format/1000 sep=},
    ymin=0,ymax=0.7,
    xlabel=\emph{Number of First Order Intervals}, ylabel=\emph{Top-5 F1 Score},
    xtick={1,2,...,7},
    xticklabels={1,2,...,6,7+},
    ybar,
    bar width=6pt,
    legend image code/.code={
        \draw [#1] (0cm,-0.1cm) rectangle (0.2cm,0.25cm); },
  ]

  \addplot [black!30!white,fill=black!30!white] coordinates {
  (1, 0.215) (2, 0.182) (3, 0.136) (4, 0.138) (5, 0.090) (6, 0.114) (7, 0.048) 
  };
  \addplot [black!100!white,fill=black!100!white] coordinates {
  (1, 0.326) (2, 0.269) (3, 0.228) (4, 0.219) (5, 0.099) (6, 0.163) (7, 0.071)
  };
  \legend{\emph{w/o IBPM},\emph{\tool}}
  \end{axis}
\end{tikzpicture}
    \caption{Comparing IBPM against the standard propagation. \wyc{(11+ represent graphs with more than 11 intervals.)}}
    \label{fig:pm}
     
\end{figure*}

}
% \input{Sections/Discussion.tex}
% !TeX root = ../ggnn.tex

\section{Related Work}\label{sec:related}

\paragraph{\textbf{Machine Learning for Defect Prediction}}
Utilizing machine learning techniques for software defect prediction is another rapidly growing research field. So far the literature has been focusing on detecting simpler bugs that are syntactic in nature. 
%is used for detecting some easier bugs in programs at the beginning, \eg, name-based bugs.
%Name-based bug detection relates to existing learning-based approaches that consider identifiers.
%Raychev et al. propose a prediction engine called JSNice for predicting (syntactic) names of identifiers and predicting (semantic) type annotations of variables \cite{raychev2015predicting}.
%They also propose a complementary tool of JSNice, called AUTONYM, to address the problem of recovering meaningful identifiers from minified code, using conditional random fields and statistical machine translation, respectively\cite{vasilescu2017recovering}. 
%Another approach summarizes code into descriptive names that can serve, \eg, as method names \cite{allamanis2017learning}.
\ignore{
Choi et al. train a memory network model \cite{} for predicting buffer overruns in programming language analysis\cite{Choi2017bufferoverrun}. 
Their work can only work on buffer overrun prediction since they generate dataset by only focusing on buffer related functions}
%Li~\etal proposed a deep learning-based vulnerability detector VulDeePecker~\cite{ndssLiZXO0WDZ18VulDeePecker} 
%, in which they use code gadgets to represent programs and then generate vectors for learning \cite{ndssLiZXO0WDZ18VulDeePecker}. 
%Their model learns from positive and negative examples, but the examples are manually created and labeled.
\citet{wang2016automatically} leverages deep belief network to learn program representations for defect prediction. Their model predicts an entire file to be buggy or not. \citet{pradel2018deepbugs} present Deepbugs, a framework to detect name-based bugs in binary operation. Another line of works targets variable misuse bugs~\cite{allamanis2017learning,vaswani2017attention}, in which developers use wrong variables, and models are required to predict the ones that should have been used. Clearly, \tool significantly differs from all of them as it locates complex semantic bugs.

% \tool deals with semantic bugs that are general and complex. As a more specialized effort, some prior works also attempted to detect vulnerabilities in programs. \citet{Choi2017bufferoverrun} trains a memory network model for predicting buffer overruns. \citet{ndssLiZXO0WDZ18VulDeePecker} proposed a deep learning-based vulnerability detector VulDeePecker. A drawback is their model relies on examples that are manually created, whereas, \tool only considers real-world programs. 
% and is equipped with the infrastructure to collect data automatically.
%learns from positive and negative examples, but the examples are manually created and labeled.
%Their work can only work on buffer overrun prediction since they generate dataset by only focusing on buffer related functions}
%Li~\etal proposed a deep learning-based vulnerability detector VulDeePecker~\cite{ndssLiZXO0WDZ18VulDeePecker} 
%, in which they use code gadgets to represent programs and then generate vectors for learning \cite{ndssLiZXO0WDZ18VulDeePecker}. 
%Their model learns from positive and negative examples, but the examples are manually created and labeled.

%They use dedicated word embedding for each kind of name-based bugs, in order to form semantic features from methods, and they use learn-bug detectors to conduct bug detection.
%However, dedicated code representation for each bug requires additional effort to design and implement. 
%In contrast, we focus on deep and semantic bugs rather than name-based bugs.

\paragraph{\textbf{Static Bug Finding}}
In principle, static analysis models all executions of a program in order to provide the soundness guarantee. However, perfectly sound static analysis tools almost never exist; concessions that sacrifice the soundness guarantee have to be made to ensure the usability of static analyzers. Several years ago, a new term---soundiness~\cite{livshits2015defense}---is brought forward aiming to clarify the level of approximation adopted in the analysis. Below we survey several flagship static analyzers.

SLAM~\cite{Ball2002}, BLAST~\cite{Henzinger2002}, and SATABS~\cite{Clarke2004} adopt abstract
refinement for static bug checking. CBMC~\cite{Clarke2003,Clarke2004} performs bounded model checking. Saturn~\cite{Xie2005,Dillig2008} is another program analysis system, which uses the combination of summaries and constraints to achieve both high scalability and precision. Compass~\cite{Dillig2011} performs bottom-up, summary-based heap analysis for the verification of real C and C++ programs.

Compared with static analysis tools, we believe \tool yields several conceptual advantages. First, with \tool, creating a static bug finder can be simplified to solving a machine learning task. In comparison, the traditional way of designing static analyzers even the unsound ones requires a substantial amount of human expertise. Besides, unlike the static analysis tools, model-based bug detectors can benefit from increasingly large datasets and powerful training regime, as a result, they will evolve and get better over time.

% Recent works focus on the paradox - a highly precise analysis limits its scalability and an imprecise one seriously hurts its precision or recall. 
% Pinpoint~\cite{shi2018pinpoint} and SMOKE~\cite{fan2019smoke} present two kinds of techniques to resolve the paradox.

% \citet{shi2018pinpoint} propose function-level summary to decompose the cost of high-precision points-to analysis by precisely discovering local data dependency. Then it leverages the function summary when conducting the expensive inter-procedural analysis. SMOKE~\cite{fan2019smoke} adopts a staged approach to design an analysis of high precision, while being able to scale to large programs. In the first stage, instead of using a uniform precise analysis for all paths, it uses a scalable but imprecise analysis. In the second stage, it leverages a more precise analysis to verify the feasibility of those candidates. Facebook Infer~\cite{calcagno2015moving, calcagno2009compositional, berdine2005smallfoot} combines techniques like separation logic~\cite{berdine2005smallfoot} and bi-abduction~\cite{calcagno2009compositional}. Separation logic is a kind of mathematical logic which facilitates reasoning about mutations to computer memory. It enables scalability by breaking reasoning into chunks corresponding to local operations on memory, and then composing the reasoning chunks together. Bi-abduction is a form of logical inference for separation logic which automates the key ideas about local reasoning.

% !TeX root = ../NeurSA.tex

\section{Conclusion}\label{sec:conclusion}

In this paper, we present \tool, a framework for creating neural bug detectors based on GNN. We also propose an interval-based propagation model to improve the capacity of GNN in learning program properties. Our evaluation shows that \tool is effective in catching semantic bugs; it outperforms several flagship static analyzers in catching null pointer dereference bugs; and even finds new bugs in many highly-rated and actively maintained projects on GitHub. For future work, we will apply the IBPM-Based GNN to other program analysis tasks as we believe our approach offers a general, and powerful framework for learning effective program analyzers.

% and is even capable of finding bugs in several highly-rated and actively maintained projects on GitHub. For future work, we will further evaluate and improve \tool so that it can be readily deployed as a bug checker. More importantly, we will apply IBPM-Based GNN to other program analysis tasks as we believe the methodology offers a general, and powerful framework for learning semantic program embeddings.

%% Acknowledgments
% \begin{acks}                            %% acks environment is optional
%                                         %% contents suppressed with 'anonymous'
%   %% Commands \grantsponsor{<sponsorID>}{<name>}{<url>} and
%   %% \grantnum[<url>]{<sponsorID>}{<number>} should be used to
%   %% acknowledge financial support and will be used by metadata
%   %% extraction tools.
%   This material is based upon work supported by the
%   \grantsponsor{GS100000001}{National Science
%     Foundation}{http://dx.doi.org/10.13039/100000001} under Grant
%   No.~\grantnum{GS100000001}{nnnnnnn} and Grant
%   No.~\grantnum{GS100000001}{mmmmmmm}.  Any opinions, findings, and
%   conclusions or recommendations expressed in this material are those
%   of the author and do not necessarily reflect the views of the
%   National Science Foundation.
% \end{acks}

\clearpage
%% Bibliography
\bibliography{NeurSA}

%%% -*-BibTeX-*-
%%% Do NOT edit. File created by BibTeX with style
%%% ACM-Reference-Format-Journals [18-Jan-2012].

\begin{thebibliography}{39}

%%% ====================================================================
%%% NOTE TO THE USER: you can override these defaults by providing
%%% customized versions of any of these macros before the \bibliography
%%% command.  Each of them MUST provide its own final punctuation,
%%% except for \shownote{}, \showDOI{}, and \showURL{}.  The latter two
%%% do not use final punctuation, in order to avoid confusing it with
%%% the Web address.
%%%
%%% To suppress output of a particular field, define its macro to expand
%%% to an empty string, or better, \unskip, like this:
%%%
%%% \newcommand{\showDOI}[1]{\unskip}   % LaTeX syntax
%%%
%%% \def \showDOI #1{\unskip}           % plain TeX syntax
%%%
%%% ====================================================================

\ifx \showCODEN    \undefined \def \showCODEN     #1{\unskip}     \fi
\ifx \showDOI      \undefined \def \showDOI       #1{#1}\fi
\ifx \showISBNx    \undefined \def \showISBNx     #1{\unskip}     \fi
\ifx \showISBNxiii \undefined \def \showISBNxiii  #1{\unskip}     \fi
\ifx \showISSN     \undefined \def \showISSN      #1{\unskip}     \fi
\ifx \showLCCN     \undefined \def \showLCCN      #1{\unskip}     \fi
\ifx \shownote     \undefined \def \shownote      #1{#1}          \fi
\ifx \showarticletitle \undefined \def \showarticletitle #1{#1}   \fi
\ifx \showURL      \undefined \def \showURL       {\relax}        \fi
% The following commands are used for tagged output and should be
% invisible to TeX
\providecommand\bibfield[2]{#2}
\providecommand\bibinfo[2]{#2}
\providecommand\natexlab[1]{#1}
\providecommand\showeprint[2][]{arXiv:#2}

\bibitem[\protect\citeauthoryear{Allamanis, Brockschmidt, and
  Khademi}{Allamanis et~al\mbox{.}}{2017}]%
        {allamanis2017learning}
\bibfield{author}{\bibinfo{person}{Miltiadis Allamanis}, \bibinfo{person}{Marc
  Brockschmidt}, {and} \bibinfo{person}{Mahmoud Khademi}.}
  \bibinfo{year}{2017}\natexlab{}.
\newblock \showarticletitle{Learning to represent programs with graphs}.
\newblock \bibinfo{journal}{\emph{arXiv preprint arXiv:1711.00740}}
  (\bibinfo{year}{2017}).
\newblock


\bibitem[\protect\citeauthoryear{Allen}{Allen}{1970}]%
        {Allen:1970:CFA:800028.808479}
\bibfield{author}{\bibinfo{person}{Frances~E. Allen}.}
  \bibinfo{year}{1970}\natexlab{}.
\newblock \showarticletitle{Control Flow Analysis}. In
  \bibinfo{booktitle}{\emph{Proceedings of a Symposium on Compiler
  Optimization}}. \bibinfo{publisher}{ACM}, \bibinfo{address}{New York, NY,
  USA}, \bibinfo{pages}{1--19}.
\newblock
\urldef\tempurl%
\url{https://doi.org/10.1145/800028.808479}
\showDOI{\tempurl}


\bibitem[\protect\citeauthoryear{Attiya and Welch}{Attiya and Welch}{2004}]%
        {attiya2004distributed}
\bibfield{author}{\bibinfo{person}{Hagit Attiya} {and}
  \bibinfo{person}{Jennifer Welch}.} \bibinfo{year}{2004}\natexlab{}.
\newblock \bibinfo{booktitle}{\emph{Distributed computing: fundamentals,
  simulations, and advanced topics}}. Vol.~\bibinfo{volume}{19}.
\newblock \bibinfo{publisher}{John Wiley \& Sons}.
\newblock


\bibitem[\protect\citeauthoryear{Ball and Rajamani}{Ball and Rajamani}{2002}]%
        {Ball2002}
\bibfield{author}{\bibinfo{person}{Thomas Ball} {and}
  \bibinfo{person}{Sriram~K. Rajamani}.} \bibinfo{year}{2002}\natexlab{}.
\newblock \showarticletitle{The SLAM Project: Debugging System Software via
  Static Analysis}.
\newblock \bibinfo{journal}{\emph{SIGPLAN Not.}} \bibinfo{volume}{37},
  \bibinfo{number}{1} (\bibinfo{date}{Jan.} \bibinfo{year}{2002}),
  \bibinfo{pages}{1--3}.
\newblock
\showISSN{0362-1340}
\urldef\tempurl%
\url{https://doi.org/10.1145/565816.503274}
\showDOI{\tempurl}


\bibitem[\protect\citeauthoryear{Berdine, Calcagno, and O\&\#39;Hearn}{Berdine
  et~al\mbox{.}}{2006}]%
        {berdine2005smallfoot}
\bibfield{author}{\bibinfo{person}{Josh Berdine}, \bibinfo{person}{Cristiano
  Calcagno}, {and} \bibinfo{person}{Peter~W. O\&\#39;Hearn}.}
  \bibinfo{year}{2006}\natexlab{}.
\newblock \showarticletitle{Smallfoot: Modular Automatic Assertion Checking
  with Separation Logic}. In \bibinfo{booktitle}{\emph{Proceedings of the 4th
  International Conference on Formal Methods for Components and Objects}}
  \emph{(\bibinfo{series}{FMCO'05})}. \bibinfo{publisher}{Springer-Verlag},
  \bibinfo{address}{Berlin, Heidelberg}, \bibinfo{pages}{115--137}.
\newblock
\showISBNx{3-540-36749-7, 978-3-540-36749-9}
\urldef\tempurl%
\url{https://doi.org/10.1007/11804192_6}
\showDOI{\tempurl}


\bibitem[\protect\citeauthoryear{Calcagno, Distefano, Dubreil, Gabi,
  Hooimeijer, Luca, O'Hearn, Papakonstantinou, Purbrick, and
  Rodriguez}{Calcagno et~al\mbox{.}}{2015}]%
        {calcagno2015moving}
\bibfield{author}{\bibinfo{person}{Cristiano Calcagno}, \bibinfo{person}{Dino
  Distefano}, \bibinfo{person}{Jeremy Dubreil}, \bibinfo{person}{Dominik Gabi},
  \bibinfo{person}{Pieter Hooimeijer}, \bibinfo{person}{Martino Luca},
  \bibinfo{person}{Peter O'Hearn}, \bibinfo{person}{Irene Papakonstantinou},
  \bibinfo{person}{Jim Purbrick}, {and} \bibinfo{person}{Dulma Rodriguez}.}
  \bibinfo{year}{2015}\natexlab{}.
\newblock \showarticletitle{Moving Fast with Software Verification}. In
  \bibinfo{booktitle}{\emph{NASA Formal Methods}},
  \bibfield{editor}{\bibinfo{person}{Klaus Havelund}, \bibinfo{person}{Gerard
  Holzmann}, {and} \bibinfo{person}{Rajeev Joshi}} (Eds.).
  \bibinfo{publisher}{Springer International Publishing},
  \bibinfo{address}{Cham}, \bibinfo{pages}{3--11}.
\newblock
\showISBNx{978-3-319-17524-9}


\bibitem[\protect\citeauthoryear{Cho, Van~Merri{\"e}nboer, Bahdanau, and
  Bengio}{Cho et~al\mbox{.}}{2014}]%
        {cho2014properties}
\bibfield{author}{\bibinfo{person}{Kyunghyun Cho}, \bibinfo{person}{Bart
  Van~Merri{\"e}nboer}, \bibinfo{person}{Dzmitry Bahdanau}, {and}
  \bibinfo{person}{Yoshua Bengio}.} \bibinfo{year}{2014}\natexlab{}.
\newblock \showarticletitle{On the properties of neural machine translation:
  Encoder-decoder approaches}.
\newblock \bibinfo{journal}{\emph{arXiv preprint arXiv:1409.1259}}
  (\bibinfo{year}{2014}).
\newblock


\bibitem[\protect\citeauthoryear{Clarke, Kroening, Sharygina, and Yorav}{Clarke
  et~al\mbox{.}}{2004}]%
        {Clarke2004}
\bibfield{author}{\bibinfo{person}{Edmund Clarke}, \bibinfo{person}{Daniel
  Kroening}, \bibinfo{person}{Natasha Sharygina}, {and} \bibinfo{person}{Karen
  Yorav}.} \bibinfo{year}{2004}\natexlab{}.
\newblock \showarticletitle{Predicate Abstraction of ANSI-C Programs Using
  SAT}.
\newblock \bibinfo{journal}{\emph{Formal Methods in System Design}}
  \bibinfo{volume}{25}, \bibinfo{number}{2} (\bibinfo{date}{01 Sep}
  \bibinfo{year}{2004}), \bibinfo{pages}{105--127}.
\newblock


\bibitem[\protect\citeauthoryear{Clarke, Kroening, and Yorav}{Clarke
  et~al\mbox{.}}{2003}]%
        {Clarke2003}
\bibfield{author}{\bibinfo{person}{Edmund Clarke}, \bibinfo{person}{Daniel
  Kroening}, {and} \bibinfo{person}{Karen Yorav}.}
  \bibinfo{year}{2003}\natexlab{}.
\newblock \showarticletitle{Behavioral Consistency of C and Verilog Programs
  Using Bounded Model Checking}. In \bibinfo{booktitle}{\emph{Proceedings of
  the 40th Annual Design Automation Conference}} \emph{(\bibinfo{series}{DAC
  '03})}. \bibinfo{publisher}{ACM}, \bibinfo{address}{New York, NY, USA},
  \bibinfo{pages}{368--371}.
\newblock
\showISBNx{1-58113-688-9}
\urldef\tempurl%
\url{https://doi.org/10.1145/775832.775928}
\showDOI{\tempurl}


\bibitem[\protect\citeauthoryear{Dillig, Dillig, and Aiken}{Dillig
  et~al\mbox{.}}{2008}]%
        {Dillig2008}
\bibfield{author}{\bibinfo{person}{Isil Dillig}, \bibinfo{person}{Thomas
  Dillig}, {and} \bibinfo{person}{Alex Aiken}.}
  \bibinfo{year}{2008}\natexlab{}.
\newblock \showarticletitle{Sound, Complete and Scalable Path-sensitive
  Analysis}. In \bibinfo{booktitle}{\emph{Proceedings of the 29th ACM SIGPLAN
  Conference on Programming Language Design and Implementation}}
  \emph{(\bibinfo{series}{PLDI '08})}. \bibinfo{publisher}{ACM},
  \bibinfo{address}{New York, NY, USA}, \bibinfo{pages}{270--280}.
\newblock
\showISBNx{978-1-59593-860-2}
\urldef\tempurl%
\url{https://doi.org/10.1145/1375581.1375615}
\showDOI{\tempurl}


\bibitem[\protect\citeauthoryear{Dillig, Dillig, Aiken, and Sagiv}{Dillig
  et~al\mbox{.}}{2011}]%
        {Dillig2011}
\bibfield{author}{\bibinfo{person}{Isil Dillig}, \bibinfo{person}{Thomas
  Dillig}, \bibinfo{person}{Alex Aiken}, {and} \bibinfo{person}{Mooly Sagiv}.}
  \bibinfo{year}{2011}\natexlab{}.
\newblock \showarticletitle{Precise and Compact Modular Procedure Summaries for
  Heap Manipulating Programs}. In \bibinfo{booktitle}{\emph{Proceedings of the
  32Nd ACM SIGPLAN Conference on Programming Language Design and
  Implementation}} \emph{(\bibinfo{series}{PLDI '11})}.
  \bibinfo{publisher}{ACM}, \bibinfo{address}{New York, NY, USA},
  \bibinfo{pages}{567--577}.
\newblock
\showISBNx{978-1-4503-0663-8}
\urldef\tempurl%
\url{https://doi.org/10.1145/1993498.1993565}
\showDOI{\tempurl}


\bibitem[\protect\citeauthoryear{Gori, Monfardini, and Scarselli}{Gori
  et~al\mbox{.}}{2005}]%
        {gori2005new}
\bibfield{author}{\bibinfo{person}{Marco Gori}, \bibinfo{person}{Gabriele
  Monfardini}, {and} \bibinfo{person}{Franco Scarselli}.}
  \bibinfo{year}{2005}\natexlab{}.
\newblock \showarticletitle{A new model for learning in graph domains}. In
  \bibinfo{booktitle}{\emph{Proceedings. 2005 IEEE International Joint
  Conference on Neural Networks, 2005.}}, Vol.~\bibinfo{volume}{2}. IEEE,
  \bibinfo{pages}{729--734}.
\newblock


\bibitem[\protect\citeauthoryear{Habib and Pradel}{Habib and Pradel}{2018}]%
        {Habib2018}
\bibfield{author}{\bibinfo{person}{Andrew Habib} {and} \bibinfo{person}{Michael
  Pradel}.} \bibinfo{year}{2018}\natexlab{}.
\newblock \showarticletitle{How Many of All Bugs Do We Find? A Study of Static
  Bug Detectors}. In \bibinfo{booktitle}{\emph{Proceedings of the 33rd ACM/IEEE
  International Conference on Automated Software Engineering}}
  \emph{(\bibinfo{series}{ASE 2018})}. \bibinfo{publisher}{ACM},
  \bibinfo{address}{New York, NY, USA}, \bibinfo{pages}{317--328}.
\newblock
\showISBNx{978-1-4503-5937-5}
\urldef\tempurl%
\url{https://doi.org/10.1145/3238147.3238213}
\showDOI{\tempurl}


\bibitem[\protect\citeauthoryear{Henzinger, Jhala, Majumdar, and
  Sutre}{Henzinger et~al\mbox{.}}{2002}]%
        {Henzinger2002}
\bibfield{author}{\bibinfo{person}{Thomas~A. Henzinger},
  \bibinfo{person}{Ranjit Jhala}, \bibinfo{person}{Rupak Majumdar}, {and}
  \bibinfo{person}{Gr{\'e}goire Sutre}.} \bibinfo{year}{2002}\natexlab{}.
\newblock \showarticletitle{Lazy Abstraction}. In
  \bibinfo{booktitle}{\emph{Proceedings of the 29th ACM SIGPLAN-SIGACT
  Symposium on Principles of Programming Languages}}
  \emph{(\bibinfo{series}{POPL '02})}. \bibinfo{publisher}{ACM},
  \bibinfo{address}{New York, NY, USA}, \bibinfo{pages}{58--70}.
\newblock
\showISBNx{1-58113-450-9}
\urldef\tempurl%
\url{https://doi.org/10.1145/503272.503279}
\showDOI{\tempurl}


\bibitem[\protect\citeauthoryear{Hovemeyer and Pugh}{Hovemeyer and
  Pugh}{2004}]%
        {Hovemeyer2004}
\bibfield{author}{\bibinfo{person}{David Hovemeyer} {and}
  \bibinfo{person}{William Pugh}.} \bibinfo{year}{2004}\natexlab{}.
\newblock \showarticletitle{Finding Bugs is Easy}.
\newblock \bibinfo{journal}{\emph{SIGPLAN Not.}} \bibinfo{volume}{39},
  \bibinfo{number}{12} (\bibinfo{date}{Dec.} \bibinfo{year}{2004}),
  \bibinfo{pages}{92--106}.
\newblock


\bibitem[\protect\citeauthoryear{{Jiang}, {Misherghi}, {Su}, and
  {Glondu}}{{Jiang} et~al\mbox{.}}{2007}]%
        {4222572}
\bibfield{author}{\bibinfo{person}{L. {Jiang}}, \bibinfo{person}{G.
  {Misherghi}}, \bibinfo{person}{Z. {Su}}, {and} \bibinfo{person}{S.
  {Glondu}}.} \bibinfo{year}{2007}\natexlab{}.
\newblock \showarticletitle{DECKARD: Scalable and Accurate Tree-Based Detection
  of Code Clones}. In \bibinfo{booktitle}{\emph{29th International Conference
  on Software Engineering (ICSE'07)}}. \bibinfo{pages}{96--105}.
\newblock
\showISSN{0270-5257}
\urldef\tempurl%
\url{https://doi.org/10.1109/ICSE.2007.30}
\showDOI{\tempurl}


\bibitem[\protect\citeauthoryear{Just, Jalali, and Ernst}{Just
  et~al\mbox{.}}{2014}]%
        {just2014defects4j}
\bibfield{author}{\bibinfo{person}{Ren{\'e} Just}, \bibinfo{person}{Darioush
  Jalali}, {and} \bibinfo{person}{Michael~D Ernst}.}
  \bibinfo{year}{2014}\natexlab{}.
\newblock \showarticletitle{Defects4J: A database of existing faults to enable
  controlled testing studies for Java programs}. In
  \bibinfo{booktitle}{\emph{Proceedings of the 2014 International Symposium on
  Software Testing and Analysis}}. ACM, \bibinfo{pages}{437--440}.
\newblock


\bibitem[\protect\citeauthoryear{Kingma and Ba}{Kingma and Ba}{2014}]%
        {kingma2014adam}
\bibfield{author}{\bibinfo{person}{Diederik~P Kingma} {and}
  \bibinfo{person}{Jimmy Ba}.} \bibinfo{year}{2014}\natexlab{}.
\newblock \showarticletitle{Adam: A method for stochastic optimization}.
\newblock \bibinfo{journal}{\emph{arXiv preprint arXiv:1412.6980}}
  (\bibinfo{year}{2014}).
\newblock


\bibitem[\protect\citeauthoryear{Li, Tarlow, Brockschmidt, and Zemel}{Li
  et~al\mbox{.}}{2015}]%
        {li2015gated}
\bibfield{author}{\bibinfo{person}{Yujia Li}, \bibinfo{person}{Daniel Tarlow},
  \bibinfo{person}{Marc Brockschmidt}, {and} \bibinfo{person}{Richard Zemel}.}
  \bibinfo{year}{2015}\natexlab{}.
\newblock \showarticletitle{Gated graph sequence neural networks}.
\newblock \bibinfo{journal}{\emph{arXiv preprint arXiv:1511.05493}}
  (\bibinfo{year}{2015}).
\newblock


\bibitem[\protect\citeauthoryear{Li, Yu, Shahabi, and Liu}{Li
  et~al\mbox{.}}{2017}]%
        {li2017diffusion}
\bibfield{author}{\bibinfo{person}{Yaguang Li}, \bibinfo{person}{Rose Yu},
  \bibinfo{person}{Cyrus Shahabi}, {and} \bibinfo{person}{Yan Liu}.}
  \bibinfo{year}{2017}\natexlab{}.
\newblock \showarticletitle{Diffusion convolutional recurrent neural network:
  Data-driven traffic forecasting}.
\newblock \bibinfo{journal}{\emph{arXiv preprint arXiv:1707.01926}}
  (\bibinfo{year}{2017}).
\newblock


\bibitem[\protect\citeauthoryear{Livshits, Sridharan, Smaragdakis, Lhot{\'a}k,
  Amaral, Chang, Guyer, Khedker, M{\o}ller, and Vardoulakis}{Livshits
  et~al\mbox{.}}{2015}]%
        {livshits2015defense}
\bibfield{author}{\bibinfo{person}{Benjamin Livshits}, \bibinfo{person}{Manu
  Sridharan}, \bibinfo{person}{Yannis Smaragdakis},
  \bibinfo{person}{Ond{\v{r}}ej Lhot{\'a}k}, \bibinfo{person}{J~Nelson Amaral},
  \bibinfo{person}{Bor-Yuh~Evan Chang}, \bibinfo{person}{Samuel~Z Guyer},
  \bibinfo{person}{Uday~P Khedker}, \bibinfo{person}{Anders M{\o}ller}, {and}
  \bibinfo{person}{Dimitrios Vardoulakis}.} \bibinfo{year}{2015}\natexlab{}.
\newblock \showarticletitle{In defense of soundiness: a manifesto}.
\newblock \bibinfo{journal}{\emph{Commun. ACM}} \bibinfo{number}{2}
  (\bibinfo{year}{2015}), \bibinfo{pages}{44--46}.
\newblock


\bibitem[\protect\citeauthoryear{Mikolov, Chen, Corrado, and Dean}{Mikolov
  et~al\mbox{.}}{2013}]%
        {mikolov2013efficient}
\bibfield{author}{\bibinfo{person}{Tomas Mikolov}, \bibinfo{person}{Kai Chen},
  \bibinfo{person}{Greg Corrado}, {and} \bibinfo{person}{Jeffrey Dean}.}
  \bibinfo{year}{2013}\natexlab{}.
\newblock \showarticletitle{Efficient estimation of word representations in
  vector space}.
\newblock \bibinfo{journal}{\emph{arXiv preprint arXiv:1301.3781}}
  (\bibinfo{year}{2013}).
\newblock


\bibitem[\protect\citeauthoryear{Pawlak, Monperrus, Petitprez, Noguera, and
  Seinturier}{Pawlak et~al\mbox{.}}{2015}]%
        {pawlak:hal-01169705}
\bibfield{author}{\bibinfo{person}{Renaud Pawlak}, \bibinfo{person}{Martin
  Monperrus}, \bibinfo{person}{Nicolas Petitprez}, \bibinfo{person}{Carlos
  Noguera}, {and} \bibinfo{person}{Lionel Seinturier}.}
  \bibinfo{year}{2015}\natexlab{}.
\newblock \showarticletitle{Spoon: A Library for Implementing Analyses and
  Transformations of Java Source Code}.
\newblock \bibinfo{journal}{\emph{Software: Practice and Experience}}
  \bibinfo{volume}{46} (\bibinfo{year}{2015}), \bibinfo{pages}{1155--1179}.
\newblock
\urldef\tempurl%
\url{https://doi.org/10.1002/spe.2346}
\showDOI{\tempurl}


\bibitem[\protect\citeauthoryear{Pradel and Sen}{Pradel and Sen}{2018}]%
        {pradel2018deepbugs}
\bibfield{author}{\bibinfo{person}{Michael Pradel} {and}
  \bibinfo{person}{Koushik Sen}.} \bibinfo{year}{2018}\natexlab{}.
\newblock \showarticletitle{Deepbugs: a learning approach to name-based bug
  detection}.
\newblock \bibinfo{journal}{\emph{Proceedings of the ACM on Programming
  Languages}} \bibinfo{volume}{2}, \bibinfo{number}{OOPSLA}
  (\bibinfo{year}{2018}), \bibinfo{pages}{147}.
\newblock


\bibitem[\protect\citeauthoryear{Sadowski, van Gogh, Jaspan, S\"{o}derberg, and
  Winter}{Sadowski et~al\mbox{.}}{2015}]%
        {Sadowski2015}
\bibfield{author}{\bibinfo{person}{Caitlin Sadowski}, \bibinfo{person}{Jeffrey
  van Gogh}, \bibinfo{person}{Ciera Jaspan}, \bibinfo{person}{Emma
  S\"{o}derberg}, {and} \bibinfo{person}{Collin Winter}.}
  \bibinfo{year}{2015}\natexlab{}.
\newblock \showarticletitle{Tricorder: Building a Program Analysis Ecosystem}.
  In \bibinfo{booktitle}{\emph{Proceedings of the 37th International Conference
  on Software Engineering}}.
\newblock


\bibitem[\protect\citeauthoryear{Saha, Lyu, Lam, Yoshida, and Prasad}{Saha
  et~al\mbox{.}}{2018}]%
        {saha2018bugs}
\bibfield{author}{\bibinfo{person}{Ripon Saha}, \bibinfo{person}{Yingjun Lyu},
  \bibinfo{person}{Wing Lam}, \bibinfo{person}{Hiroaki Yoshida}, {and}
  \bibinfo{person}{Mukul Prasad}.} \bibinfo{year}{2018}\natexlab{}.
\newblock \showarticletitle{Bugs. jar: a large-scale, diverse dataset of
  real-world java bugs}. In \bibinfo{booktitle}{\emph{2018 IEEE/ACM 15th
  International Conference on Mining Software Repositories (MSR)}}. IEEE,
  \bibinfo{pages}{10--13}.
\newblock


\bibitem[\protect\citeauthoryear{Scarselli, Gori, Tsoi, Hagenbuchner, and
  Monfardini}{Scarselli et~al\mbox{.}}{2008}]%
        {scarselli2008graph}
\bibfield{author}{\bibinfo{person}{Franco Scarselli}, \bibinfo{person}{Marco
  Gori}, \bibinfo{person}{Ah~Chung Tsoi}, \bibinfo{person}{Markus
  Hagenbuchner}, {and} \bibinfo{person}{Gabriele Monfardini}.}
  \bibinfo{year}{2008}\natexlab{}.
\newblock \showarticletitle{The graph neural network model}.
\newblock \bibinfo{journal}{\emph{IEEE Transactions on Neural Networks}}
  \bibinfo{volume}{20}, \bibinfo{number}{1} (\bibinfo{year}{2008}),
  \bibinfo{pages}{61--80}.
\newblock


\bibitem[\protect\citeauthoryear{Scott, Bader, and Chandra}{Scott
  et~al\mbox{.}}{2019}]%
        {scott2019getafix}
\bibfield{author}{\bibinfo{person}{Andrew Scott}, \bibinfo{person}{Johannes
  Bader}, {and} \bibinfo{person}{Satish Chandra}.}
  \bibinfo{year}{2019}\natexlab{}.
\newblock \showarticletitle{Getafix: Learning to fix bugs automatically}.
\newblock \bibinfo{journal}{\emph{Proc. ACM Program. Lang.}}
  \bibinfo{volume}{2}, \bibinfo{number}{OOPSLA}.
\newblock


\bibitem[\protect\citeauthoryear{Shi, Xiao, Wu, Zhou, Fan, and Zhang}{Shi
  et~al\mbox{.}}{2018}]%
        {shi2018pinpoint}
\bibfield{author}{\bibinfo{person}{Qingkai Shi}, \bibinfo{person}{Xiao Xiao},
  \bibinfo{person}{Rongxin Wu}, \bibinfo{person}{Jinguo Zhou},
  \bibinfo{person}{Gang Fan}, {and} \bibinfo{person}{Charles Zhang}.}
  \bibinfo{year}{2018}\natexlab{}.
\newblock \showarticletitle{Pinpoint: fast and precise sparse value flow
  analysis for million lines of code}. In \bibinfo{booktitle}{\emph{Proceedings
  of the 39th ACM SIGPLAN Conference on Programming Language Design and
  Implementation}}. ACM, \bibinfo{pages}{693--706}.
\newblock


\bibitem[\protect\citeauthoryear{Si, Dai, Raghothaman, Naik, and Song}{Si
  et~al\mbox{.}}{2018}]%
        {si2018learning}
\bibfield{author}{\bibinfo{person}{Xujie Si}, \bibinfo{person}{Hanjun Dai},
  \bibinfo{person}{Mukund Raghothaman}, \bibinfo{person}{Mayur Naik}, {and}
  \bibinfo{person}{Le Song}.} \bibinfo{year}{2018}\natexlab{}.
\newblock \showarticletitle{Learning loop invariants for program verification}.
  In \bibinfo{booktitle}{\emph{Advances in Neural Information Processing
  Systems}}. \bibinfo{pages}{7751--7762}.
\newblock


\bibitem[\protect\citeauthoryear{Svozil, Kvasnicka, and Pospichal}{Svozil
  et~al\mbox{.}}{1997}]%
        {svozil1997introduction}
\bibfield{author}{\bibinfo{person}{Daniel Svozil}, \bibinfo{person}{Vladimir
  Kvasnicka}, {and} \bibinfo{person}{Jiri Pospichal}.}
  \bibinfo{year}{1997}\natexlab{}.
\newblock \showarticletitle{Introduction to multi-layer feed-forward neural
  networks}.
\newblock \bibinfo{journal}{\emph{Chemometrics and intelligent laboratory
  systems}} \bibinfo{volume}{39}, \bibinfo{number}{1} (\bibinfo{year}{1997}),
  \bibinfo{pages}{43--62}.
\newblock


\bibitem[\protect\citeauthoryear{Tomassi, Dmeiri, Wang, Bhowmick, Liu, Devanbu,
  Vasilescu, and Rubio-Gonz{\'a}lez}{Tomassi et~al\mbox{.}}{2019}]%
        {tomassi2019bugswarm}
\bibfield{author}{\bibinfo{person}{David~A Tomassi}, \bibinfo{person}{Naji
  Dmeiri}, \bibinfo{person}{Yichen Wang}, \bibinfo{person}{Antara Bhowmick},
  \bibinfo{person}{Yen-Chuan Liu}, \bibinfo{person}{Premkumar~T Devanbu},
  \bibinfo{person}{Bogdan Vasilescu}, {and} \bibinfo{person}{Cindy
  Rubio-Gonz{\'a}lez}.} \bibinfo{year}{2019}\natexlab{}.
\newblock \showarticletitle{Bugswarm: mining and continuously growing a dataset
  of reproducible failures and fixes}. In \bibinfo{booktitle}{\emph{2019
  IEEE/ACM 41st International Conference on Software Engineering (ICSE)}}.
  IEEE, \bibinfo{pages}{339--349}.
\newblock


\bibitem[\protect\citeauthoryear{Vaswani, Shazeer, Parmar, Uszkoreit, Jones,
  Gomez, Kaiser, and Polosukhin}{Vaswani et~al\mbox{.}}{2017}]%
        {vaswani2017attention}
\bibfield{author}{\bibinfo{person}{Ashish Vaswani}, \bibinfo{person}{Noam
  Shazeer}, \bibinfo{person}{Niki Parmar}, \bibinfo{person}{Jakob Uszkoreit},
  \bibinfo{person}{Llion Jones}, \bibinfo{person}{Aidan~N Gomez},
  \bibinfo{person}{{\L}ukasz Kaiser}, {and} \bibinfo{person}{Illia
  Polosukhin}.} \bibinfo{year}{2017}\natexlab{}.
\newblock \showarticletitle{Attention is all you need}. In
  \bibinfo{booktitle}{\emph{Advances in neural information processing
  systems}}. \bibinfo{pages}{5998--6008}.
\newblock


\bibitem[\protect\citeauthoryear{Wang, Liu, and Tan}{Wang
  et~al\mbox{.}}{2016}]%
        {wang2016automatically}
\bibfield{author}{\bibinfo{person}{Song Wang}, \bibinfo{person}{Taiyue Liu},
  {and} \bibinfo{person}{Lin Tan}.} \bibinfo{year}{2016}\natexlab{}.
\newblock \showarticletitle{Automatically learning semantic features for defect
  prediction}. In \bibinfo{booktitle}{\emph{2016 IEEE/ACM 38th International
  Conference on Software Engineering (ICSE)}}. IEEE, \bibinfo{pages}{297--308}.
\newblock


\bibitem[\protect\citeauthoryear{Weiser}{Weiser}{1981}]%
        {Weiser1981}
\bibfield{author}{\bibinfo{person}{Mark Weiser}.}
  \bibinfo{year}{1981}\natexlab{}.
\newblock \showarticletitle{Program Slicing}. In
  \bibinfo{booktitle}{\emph{Proceedings of the 5th International Conference on
  Software Engineering}} \emph{(\bibinfo{series}{ICSE '81})}.
  \bibinfo{publisher}{IEEE Press}, \bibinfo{address}{Piscataway, NJ, USA},
  \bibinfo{pages}{439--449}.
\newblock
\showISBNx{0-89791-146-6}
\urldef\tempurl%
\url{http://dl.acm.org/citation.cfm?id=800078.802557}
\showURL{%
\tempurl}


\bibitem[\protect\citeauthoryear{Xie and Aiken}{Xie and Aiken}{2005}]%
        {Xie2005}
\bibfield{author}{\bibinfo{person}{Yichen Xie} {and} \bibinfo{person}{Alex
  Aiken}.} \bibinfo{year}{2005}\natexlab{}.
\newblock \showarticletitle{Scalable Error Detection Using Boolean
  Satisfiability}. In \bibinfo{booktitle}{\emph{Proceedings of the 32Nd ACM
  SIGPLAN-SIGACT Symposium on Principles of Programming Languages}}
  \emph{(\bibinfo{series}{POPL '05})}. \bibinfo{publisher}{ACM},
  \bibinfo{address}{New York, NY, USA}, \bibinfo{pages}{351--363}.
\newblock
\showISBNx{1-58113-830-X}
\urldef\tempurl%
\url{https://doi.org/10.1145/1040305.1040334}
\showDOI{\tempurl}


\bibitem[\protect\citeauthoryear{Xu, Zhu, Choy, and Fei-Fei}{Xu
  et~al\mbox{.}}{2017}]%
        {8099813}
\bibfield{author}{\bibinfo{person}{Danfei Xu}, \bibinfo{person}{Yuke Zhu},
  \bibinfo{person}{Christopher~B Choy}, {and} \bibinfo{person}{Li Fei-Fei}.}
  \bibinfo{year}{2017}\natexlab{}.
\newblock \showarticletitle{Scene graph generation by iterative message
  passing}. In \bibinfo{booktitle}{\emph{Proceedings of the IEEE Conference on
  Computer Vision and Pattern Recognition}}. \bibinfo{pages}{5410--5419}.
\newblock


\bibitem[\protect\citeauthoryear{Ye, Bunescu, and Liu}{Ye
  et~al\mbox{.}}{2014}]%
        {ye2014learning}
\bibfield{author}{\bibinfo{person}{Xin Ye}, \bibinfo{person}{Razvan Bunescu},
  {and} \bibinfo{person}{Chang Liu}.} \bibinfo{year}{2014}\natexlab{}.
\newblock \showarticletitle{Learning to rank relevant files for bug reports
  using domain knowledge}. In \bibinfo{booktitle}{\emph{Proceedings of the 22nd
  ACM SIGSOFT International Symposium on Foundations of Software Engineering}}.
  ACM, \bibinfo{pages}{689--699}.
\newblock


\bibitem[\protect\citeauthoryear{Ying, He, Chen, Eksombatchai, Hamilton, and
  Leskovec}{Ying et~al\mbox{.}}{2018}]%
        {Ying:2018}
\bibfield{author}{\bibinfo{person}{Rex Ying}, \bibinfo{person}{Ruining He},
  \bibinfo{person}{Kaifeng Chen}, \bibinfo{person}{Pong Eksombatchai},
  \bibinfo{person}{William~L. Hamilton}, {and} \bibinfo{person}{Jure
  Leskovec}.} \bibinfo{year}{2018}\natexlab{}.
\newblock \showarticletitle{Graph Convolutional Neural Networks for Web-Scale
  Recommender Systems}. In \bibinfo{booktitle}{\emph{Proceedings of the 24th
  ACM SIGKDD International Conference on Knowledge Discovery \&\#38; Data
  Mining}} \emph{(\bibinfo{series}{KDD '18})}. \bibinfo{publisher}{ACM},
  \bibinfo{address}{New York, NY, USA}, \bibinfo{pages}{974--983}.
\newblock


\end{thebibliography}

% \newpage
% %% Appendix
% \appendix
% \section{Appendix}

% %Text of appendix %\ldots
% \input{Sections/Append.tex}

%\input{Sections/Tables.tex}

\end{document}